\RequirePackage{amsmath}

\documentclass[runningheads]{llncs}

\usepackage[T1]{fontenc}
\usepackage{lmodern}

\usepackage{amssymb}
\usepackage{amsmath}
\usepackage{float}

\usepackage{tikz}
\tikzset{every picture/.style={line width=0.75pt}}

\begin{document}

\title{Fast FPT Algorithms for Grundy Number on Dense Graphs} 

\author{Sina Ghasemi Nezhad\inst{1}\orcidID{0009-0001-9747-2121} \and Maryam Moghaddas\inst{1}\orcidID{0009-0008-1611-1633} \and Fahad Panolan\inst{2}\orcidID{0000-0001-6213-8687}}

\authorrunning{S. Ghasemi Nezhad, M. Moghaddas, and F. Panolan.}

\institute{Department of Mathematical Sciences, Sharif University of Technology, Tehran, Iran \\ \email{\{sina.ghaseminejad, maryam.moghaddas\}@sharif.edu} \and School of Computer Science, University of Leeds, Leeds, UK \\ \email{f.panolan@leeds.ac.uk}}

\maketitle

\begin{abstract}
    In this paper, we investigate the \textsc{Grundy Coloring} problem for graphs with a cluster modulator, a structure commonly found in dense graphs. The Grundy chromatic number, representing the maximum number of colors needed for the first-fit coloring of a graph in the worst-case vertex ordering, is known to be $W[1]$-hard when parameterized by the number of colors required by the most adversarial ordering. We focus on fixed-parameter tractable (FPT) algorithms for solving this problem on graph classes characterized by dense substructures, specifically those with a cluster modulator. A cluster modulator is a vertex subset whose removal results in a cluster graph (a disjoint union of cliques). We present FPT algorithms for graphs where the cluster graph consists of one, two, or $k$ cliques, leveraging the cluster modulator’s properties to achieve the best-known FPT runtimes, parameterized by both the modulator’s size and the number of cliques.

    \keywords{Grundy Coloring \and Cluster Graphs \and Max Flow \and Fixed-Parameter Tractable (FPT) Algorithms \and Integer Program.}
\end{abstract}

\section{Introduction}

The \textit{first-fit} coloring is a heuristic that assigns to each vertex, arriving in a specified order $\sigma$, the smallest available color such that the coloring remains proper. Here, we denote the colors as natural numbers $\{1,2,3,\ldots\}$. The {\em Grundy chromatic number} (or simply {\em Grundy number}) is the number of colors that are needed for the most adversarial vertex ordering $\sigma$, i.e., the maximum number of colors that the first-fit coloring requires over all possible vertex orderings. In other words, if the Grundy number of a graph $G$ is $\ell$, then $\ell$ is the largest integer such that the following holds. There is an ordered partition $(V_1, \ldots, V_{\ell})$ of the vertex set $V(G)$ of $G$  such that for all $i\in [\ell]$ and $v\in V_i$, $N_G(v)\cap V_j\neq \emptyset$ for all $j\in \{1,\ldots, i-1\}$ and $V_i$ is an independent set. Here, we use $N_G(v)$ to denote the set of neighbors of $v$ and $[\ell]$ to denote the set $\{1,2,\ldots,\ell\}$. Also, we say that any ordered partition $(U_1, \ldots, U_{\ell^\prime})$ of $V(G)$ is a {\em Grundy coloring} of $G$ if it satisfies the above-mentioned property. That is, for all $i\in [\ell^\prime]$ and $v\in U_i$, $N_G(v)\cap U_j\neq \emptyset$ for all $j \in \{1, \ldots, i-1\}$ and $U_i$ is an independent set. The Grundy number was first introduced by Patrick Michael Grundy but was formally defined by Christen and Selkow \cite{christen1979some}.

In the \textsc{Grundy Coloring} problem, we are given a graph $G$ and an integer $\ell$, and we want to test whether the Grundy number of $G$ is at least $\ell$ or not. The \textsc{Grundy Coloring} problem has been shown to be NP-hard even on special graph classes like bipartite graphs, their complements, chordal graphs, and line graphs, as established by known results in the literature \cite{zaker2005grundy,sampaio2012algorithmic,havet2013grundy,havet2015complexity}. While this makes solving the problem in general infeasible in polynomial time, efficient algorithms have been designed for special graph classes. The \textsc{Grundy Coloring} for a tree can be solved in linear time \cite{hedetniemi1982linear}, and this result has been generalized with a polynomial time algorithm for the bounded-treewidth graphs \cite{telle1997algorithms}.

\textsc{Grundy Coloring} is well studied in the realm of parameterized complexity. Aboulker et al. \cite{aboulker2023grundy} proved that \textsc{Grundy Coloring} is $W[1]$-hard when parameterized by the number of colors required by the most adversarial vertex ordering, leveraging the structure of half-graphs. Furthermore, their work also establishes that the brute-force algorithm, with a running time of $f(\ell) n^{2^{\ell-1}}$, is essentially optimal under the Exponential Time Hypothesis (ETH). In the context of parameterized complexity, the problem has been studied with various parameters. For instance, Bonnet et al. \cite{bonnet2018complexity} explored the complexity of the  \textsc{Grundy Coloring} problem, identifying a range of parameterizations for which the problem remains computationally challenging. Similarly, Belmonte et al. \cite{belmonte2022grundy} considered the parameters treewidth and pathwidth, and proved the following surprising result. \textsc{Grundy Coloring} is $W[1]$-hard when parameterized by treewidth, but FPT when parameterized by pathwidth. Their work also demonstrates that \textsc{Grundy Coloring} is FPT when parameterized by the neighborhood diversity (and more generally modular width). The running time of their algorithm is $2^{O(w2^w)}n^{O(1)}$, where $w$ is the modular width or neighborhood diversity of the input graph.  

We study \textsc{Grundy Coloring} problem on dense graphs from the perspective of parameterized complexity. We consider the problem when parameterized by $k$-cluster modulator. That is, the input consists of a graph $G$ and a vertex subset $R\subseteq V(G)$ of size $r$ such that $G-R$ is a $k$-cluster graph (i.e., $G-R$ has $k$ connected components and each of them is a complete graph). Here, the parameter is $k+r$. Observe that in this case, the neighborhood diversity of $G$ is at most $k 2^{r} + r$. Thus, the result of  Belmonte et al. \cite{belmonte2022grundy} implies that there is an algorithm with a running time of $2^{O(2^{k2^r})}n^{O(1)}$. Notice that even when $k=1$, this running time is at least $2^{\Omega(2^{2^r})}$.

Our main results are when $k=1$ and $k=2$. When $k=1$, the parameter is the clique modulator. In this case, we design an algorithm with a running time of $O(2^{O(r^2)} + n + m)$, where $m$ is the number of edges in the graph, and this result can be found in Section~\ref{sec:clique}. 

As a byproduct of our proof, we also get that the problem admits a kernel of size $O(r2^r)$.  When $k=2$, we design an algorithm with a running time of $O(2^{O(r^2)}n^6)$. To prove this result we model the problem as many instances of the classic max flow problem. This result can be found in Section~\ref{sec:flow}. 

For general $k$ we give faster algorithm with a running time of $2^{O(2^{kr})}n^{O(1)}$ in Section~\ref{sec:kcluster}. To prove this result we first observe some structural results on Grundy coloring and then model the problem as an Integer Linear Program with few variables, like the method used by Belmonte et al. \cite{belmonte2022grundy}.

\section{Preliminaries} \label{sec:prelims}

For  a natural number $q\in \{1,2,\ldots\}$, we use $[q]$ to denote the set $\{1,2\ldots,q\}$.  Let $a = (a_1, \ldots, a_n)$ and $b = (b_1, \ldots, b_m)$ be two sequences. The \textit{concatenation} of $a$ and $b$, denoted $ab$, is the sequence formed by appending all elements of $b$ to the end of $a$, resulting in $(a_1, \ldots, a_n, b_1, \ldots, b_m)$. A sequence $c = (c_1, \ldots, c_s)$ is called a \textit{subsequence} of a sequence $d = (d_1, \ldots, d_t)$ if there exists a strictly increasing sequence of indices $1 \leq i_1 < \cdots < i_s \leq t$ such that $c_j = d_{i_j}$ for all $j \in \{1, \ldots, s\}$.

We use standard graph notations in this paper: for a graph $G$, the set of vertices is denoted by $V(G)$ (with a size of $|V(G)| = n$), and the set of edges is denoted by $E(G)$ (with a size of $|E(G)| = m$). For a vertex $v \in V(G)$, we use $N_G (v)$ to denote the open neighborhood set of $v$ in $G$. That is, $N_G (v)=\{u\in V(G) : \{u,v\}\in E(G)\}$. We also use  $N_G [v]$ for the closed neighborhood set of $v$ in $G$, i.e., $N_G (v) \cup \{v\}$. For a vertex subset $X \subseteq V$, $G[X]$ represents the induced subgraph of $G$ over $X$. That is, $V(G[X])=X$ and $E(G[X])=\{\{u,v\}\in E(G) : u,v\in X\}$. For $X\subseteq V(G)$, $G-X$ denotes the induced subgraph $G[V(G)\setminus X]$.

A vertex subset $I\subseteq V(G)$ is called an {\em independent set} if for any pair of vertices $u,v\in X$, $\{u,v\}\notin E(G)$. A vertex subset $K\subseteq V(G)$ is called an {\em clique} if for any pair of vertices $u,v\in X$, $\{u,v\}\in E(G)$ (i.e., $G[K]$ is a complete graph). A graph $G$ is a {\em cluster graph} if each connected component of $G$ is a complete graph. We say that $G$ is a {\em $k$-cluster graph} if $G$ is a cluster graph and the number of connected components in $G$ is $k$. For a graph $G$, a vertex subset $R\subseteq V(G)$ is a {\em clique modulator} if $V(G)\setminus R$ is a clique in $G$.  For a graph $G$, a vertex subset $R\subseteq V(G)$ is a {\em cluster modulator} (respectively, {\em $k$-cluster modulator}) if $G-R$ is a cluster graph (respectively, $k$-cluster graph).

A graph coloring is called \textit{proper} if no two adjacent vertices share the same color; otherwise, it is referred to as \textit{improper}. In the context of proper colorings, we define \textit{color classes} as subsets $C \subseteq V(G)$, where all vertices within $C$ are assigned the same color in the coloring of graph $G$. We denote the color classes of $G$ as $(C_1, \ldots, C_\gamma)$, where $\gamma$ represents the total number of colors used.

Now we prove some simple results about Grundy coloring which we use in the later sections.

\begin{lemma} \label{lemma:swap-vertices} 
    Let $G$ be a graph and  $u,v\in V(G)$ such that $N_G[u] = N_{G}[v]$. Let $\sigma$ be a permutation of $V(G)$ and $\hat{\sigma}$ is the permutation obatined from $\sigma$ by swapping the vertices $u$ and $v$ in $\sigma$. Then, the first-fit colorings of $G$ with respect to $\sigma$ and $\hat{\sigma}$ use the same number of colors.   
\end{lemma}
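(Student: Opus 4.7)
The plan is to argue that the two first-fit colorings are in fact \emph{identical} in the stronger sense that the color assigned to the vertex in position $t$ is the same under $\sigma$ and $\hat\sigma$ for every $t$, from which the equality of color counts is immediate. Assume without loss of generality that in $\sigma$ the vertex $u$ appears in some position $i$ and $v$ in a later position $j > i$; in $\hat\sigma$ the vertex in position $i$ is $v$ and the vertex in position $j$ is $u$, while every other position holds the same vertex in both permutations.

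I would proceed by induction on the position $t$, maintaining the invariant that the first-fit color assigned to position $t$ is the same under $\sigma$ and under $\hat\sigma$. For $t < i$ the permutations literally agree on the prefix, so the invariant is trivial. At $t = i$, the colored prefixes of length $i-1$ are identical (same vertices, same colors); moreover, since $N_G[u]=N_G[v]$ and no vertex in the prefix equals $u$ or $v$, any prefix vertex is adjacent to $u$ iff it is adjacent to $v$. Hence the set of colors forbidden at position $i$ is the same whether we are trying to color $u$ (under $\sigma$) or $v$ (under $\hat\sigma$), so the first-fit rule assigns the same color $c_i$.

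For $i < t < j$, the vertex $w$ at position $t$ is identical in both permutations. Its colored neighborhood in the prefix is also identical in the two colorings: the only position at which the two prefixes carry different vertices is position $i$ (which holds $u$ in one case and $v$ in the other), both with the same color $c_i$; and since $w\neq u,v$, the twin condition gives $w\in N_G(u)\iff w\in N_G(v)$, so position $i$ contributes to $w$'s colored neighborhood in exactly the same way in both runs. By induction the color assigned to $w$ is therefore identical. At $t=j$ the same argument applies: the vertex being colored is $v$ under $\sigma$ and $u$ under $\hat\sigma$, but their colored neighborhoods in the prefix of length $j-1$ (which now includes the swapped partner at position $i$) coincide by the twin condition, so the same first-fit color $c_j$ is chosen. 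For $t > j$ the two prefixes agree entirely (the two swapped positions now carry the same (vertex,color) multiset, and the twin vertex at either position is adjacent to any later vertex in exactly the same way), so the induction continues routinely.

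There is no real obstacle here; the only thing that needs care is the bookkeeping at $t=i$ and $t=j$, where the vertex being colored actually differs between $\sigma$ and $\hat\sigma$. The twin hypothesis $N_G[u]=N_G[v]$ (which in particular includes $u\sim v$, ensuring that no inconsistency arises when considering adjacency between $u$ and $v$ themselves) is used precisely to transport the forbidden-color set across the swap at those two positions.
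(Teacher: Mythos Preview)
Your proof is correct and takes essentially the same approach as the paper: both arguments walk through the ordering in three zones (before the first swapped position, between the two swapped positions, after the second), using the twin condition $N_G[u]=N_G[v]$ to show that the forbidden-color sets coincide at each step. The only cosmetic difference is that you index by \emph{position} and phrase it as an explicit induction, whereas the paper indexes by \emph{vertex} and argues that $u$ and $v$ exchange colors while every other vertex keeps its color; the substance is identical.
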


\begin{proof}
    Assume that in $\sigma$, vertex $v$ appears before $u$, and in $\hat{\sigma}$, vertex $u$ appears before $v$. For any vertex $x$ that appears before $u$ in $\hat{\sigma}$, the color assigned to $x$ will not change, as the neighbors of $x$ that appear earlier than $x$ in the permutation remain unchanged. Note that the vertices appearing before $v$ in $\sigma$ are exactly the same as those appearing before $u$ in $\hat{\sigma}$. Therefore, since $N_G[u] = N_G[v]$, vertex $u$ will receive the same color in $\hat{\sigma}$ as $v$ does in $\sigma$.

    Next, consider any vertex $y$ that appears between $u$ and $v$ in $\hat{\sigma}$. If $y \notin N_G[u]$ (or equivalently $y \notin N_G[v]$), then swapping $u$ and $v$ will not affect the colors assigned to the neighbors of $y$ that appear before $y$ in the permutation. If $y \in N_G[u]$, when we reach $y$ in the first-fit coloring of $\hat{\sigma}$, the set of occupied colors is unchanged since the only modification is that $u$ appears before $y$ instead of $v$. As shown earlier, $u$ receives the same color in $\hat{\sigma}$ as $v$ does in $\sigma$, so the color of $y$ remains the same as in $\sigma$. Thus, similar to the argument made for $v$, since $N_G[u] = N_G[v]$, vertex $v$ will receive the same color in $\hat{\sigma}$ as $u$ does in $\sigma$.

    Finally, for any vertex $z$ that appears after $v$ in $\hat{\sigma}$, the colors of its neighbors that appear earlier than $z$ remain unchanged, except for the possible swap of colors between $u$ and $v$. Therefore, the set of occupied colors remains the same, and $z$ will receive the same color as in $\sigma$, which completes the proof. \qed
\end{proof}

\begin{lemma} \label{lemma:perm-to-colorclass} 
    Consider a permutation $\sigma$ of the vertices and perform the first-fit coloring to it to reach the color classes $(C_1, \ldots, C_\gamma)$. For each $1 \leq i \leq \gamma$, let the vertices in $C_i$ be $\{v_{i1}, \ldots, v_{i n_i}\}$ where $n_i$ is the number of vertices in the $i^\text{th}$ color class. Then the following permutation is called the sorted permutation of $\sigma$ and will have the exact same color classes as $\sigma$ under the first-fit coloring:
    $$
        \sigma_s = v_{11}, v_{12}, \ldots, v_{1 n_1}, v_{21}, v_{22}, \ldots, v_{2 n_2}, \ldots, v_{\gamma 1}, v_{\gamma 2}, \ldots, v_{\gamma n_\gamma}
    $$
\end{lemma}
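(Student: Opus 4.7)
The plan is to run induction on the color index $i$, showing that when first-fit is applied to $\sigma_s$, every vertex of $C_i$ receives color $i$; once this is proved for all $i\in\{1,\ldots,\gamma\}$, the color classes produced by $\sigma_s$ coincide with $(C_1,\ldots,C_\gamma)$. Two structural facts about the classes produced by first-fit on $\sigma$ drive the argument: (i) each $C_i$ is an independent set, which is immediate from properness; and (ii) every $v\in C_i$ has at least one neighbor in every earlier class $C_j$, $j<i$, because otherwise color $j$ would have been free when $v$ was processed under $\sigma$ and first-fit would not have escalated to color $i$.

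With (i) and (ii) in hand, the base case $i=1$ is straightforward: the first $n_1$ vertices of $\sigma_s$ form the independent set $C_1$, so they all get color $1$. For the inductive step, I would assume that the first-fit run on $\sigma_s$ has already assigned colors $1,\ldots,i-1$ to the vertices of $C_1,\ldots,C_{i-1}$ and analyze the moment $v_{ij}\in C_i$ is processed. The already-coloured vertices split into $C_1\cup\cdots\cup C_{i-1}$ and $\{v_{i1},\ldots,v_{i(j-1)}\}\subseteq C_i$; by (i) the latter contributes no forbidden colors to $v_{ij}$, while by (ii) the former forbids every color in $\{1,\ldots,i-1\}$. Hence color $i$ is the smallest available color, and first-fit assigns it to $v_{ij}$, closing the induction.

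I do not anticipate a real obstacle. The whole content of the lemma is that properties (i) and (ii) make the internal order of each color class irrelevant to first-fit, and the induction above is a direct realisation of that observation. The only step deserving more than a mechanical sentence is the justification of (ii), which is the unique place where the greedy rule of first-fit is actively invoked; everything else reduces to reading off what is already available to the first-fit algorithm at each step of the sorted run.
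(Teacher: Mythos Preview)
Your proof is correct and follows essentially the same approach as the paper: both arguments rest on the two structural facts you label (i) and (ii), and both use them to show that each vertex of $C_i$ must receive color~$i$ under $\sigma_s$. The paper phrases this as a minimal-counterexample contradiction (taking the first vertex in $\sigma_s$ whose color differs and ruling out both directions of deviation), whereas you give a cleaner forward induction on $i$; the substance is identical.
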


\begin{proof}
    Assume that by performing the first-fit coloring, there exists a vertex $ u $ such that it is the first element in $\sigma_s$ that receives a different color class from the one it had in $\sigma$. Let the color class of $ u $ in $\sigma$ be $ C_x $. 

    First, suppose that the index of the color class of $ u $ in $\sigma_s$ is less than its index in $\sigma$. This implies that in $\sigma$, there are vertices that belong to the color classes $ C_1, \ldots, C_{x-1} $ that appear before $ u $ and are neighbors of $ u $. These vertices will also appear before $ u $ in $\sigma_s$, as $\sigma_s$ orders vertices within each color class while preserving the order of the color classes themselves. Since these vertices are neighbors of $ u $, the first-fit coloring of $\sigma_s$ cannot assign $ u $ a color corresponding to a class earlier than $ C_x $, contradicting the assumption that $ u $'s color class index in $\sigma_s$ is lower than in $\sigma$.

    Next, suppose that the index of the color class of $u$ in $\sigma_s$ is greater than its index in $\sigma$. In this case, if $u$ appears later in $\sigma_s$ than expected, there must be a vertex $v \in N_G(u)$ that appears before $u$ in $\sigma_s$ and belongs to the color class $C_x$. Since $u$ is assumed to be the first vertex in $\sigma_s$ with a different color class from its class in $\sigma$, $v$ must also belong to $C_x$ in $\sigma$. This implies that $u$ and $v$, being neighbors, would share the same color class in $\sigma$, which contradicts the definition of proper coloring.

    Since both possibilities lead to contradictions, it follows that the sorted permutation $\sigma_s$ will produce the exact same color classes as $\sigma$ under the first-fit coloring. Therefore, the lemma is proven. \qed
\end{proof}

Due to the above lemma, throughout this paper, we will present a color classification instead of a permutation of vertices to represent a Grundy coloring.

\begin{lemma} \label{lemma:singletons-at-the-end-G}
    Let $G$ be a graph, and $\gamma$ be the Grundy number of $G$. There is a Grundy coloring $(C_1,\ldots,C_{\gamma})$ of $G$ with the following property. There is an index $i \in \{1,\ldots,\gamma\}$ such that for each $j \leq i$, we have $|C_j| > 1$, and for each $j>i$, we have $|C_j|=1$.
\end{lemma}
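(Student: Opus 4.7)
The plan is to take any Grundy coloring $(C_1, \ldots, C_\gamma)$ of $G$ and repeatedly move singleton classes toward the end of the sequence by adjacent swaps, until no singleton is immediately followed by a multi-element class. This is a bubble-sort argument whose key invariant is that each swap preserves both the number of colors and the Grundy property.

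Before the swap itself, I would record one crucial adjacency observation: whenever $C_i = \{v\}$ is a singleton, every vertex of any later class $C_j$ (with $j > i$) is adjacent to $v$, because by the Grundy property each such vertex needs a neighbor in $C_i$ and $v$ is the only candidate. This single fact is what makes the swap go through.

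Next I would define the swap: if $|C_i| = 1$ and $|C_{i+1}| \geq 2$, write $C_i = \{v\}$ and replace $(\ldots, C_i, C_{i+1}, \ldots)$ by $(\ldots, C_{i+1}, C_i, \ldots)$. I would check that the new sequence is still a Grundy coloring by inspecting three regions. Classes before position $i$ are untouched. The new class at position $i$ is the old $C_{i+1}$; each of its vertices already had a neighbor in each of $C_1, \ldots, C_{i-1}$ under the old ordering, so it still does. The new class at position $i+1$ is $\{v\}$; the vertex $v$ already had a neighbor in each of $C_1, \ldots, C_{i-1}$, and by the observation above it has a neighbor inside old $C_{i+1}$, which now sits at position $i$. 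Finally, for every $k > i+1$ and every $w \in C_k$, the first $k-1$ classes are exactly the same sets as before, merely reordered, so $w$ still has a neighbor in each. Independence of each class is trivial, since we only permute classes as sets.

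To close the argument I would give a termination measure, for instance the number of pairs $(i,j)$ with $i < j$, $|C_i| = 1$, $|C_j| > 1$, which strictly decreases under each swap; hence the process terminates in a sequence where no singleton precedes a multi-element class, which is exactly the form claimed in the lemma (taking $i = 0$ or $i = \gamma$ in the degenerate cases). The only substantive step, and the one I would be most careful about, is verifying the Grundy condition for the swapped singleton at its new, later position; everything else is straightforward bookkeeping, and that step reduces directly to the singleton observation made at the start.
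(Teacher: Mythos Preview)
Your proof is correct and follows essentially the same approach as the paper: both rely on the key observation that if $C_i=\{v\}$ is a singleton, then every vertex in any later class is adjacent to $v$, and both iteratively relocate singletons past multi-element classes until the desired form is reached. The only cosmetic difference is that the paper moves a misplaced singleton $C_x$ directly to position $\gamma$ in a single step (yielding $(C_1,\ldots,C_{x-1},C_{x+1},\ldots,C_\gamma,C_x)$), whereas you bubble it rightward by adjacent swaps with an explicit inversion-count termination measure; your version is slightly more detailed in verifying the Grundy property after each move, but the underlying idea is identical.
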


\begin{proof}
    The lemma states that we can rearrange the singleton color classes in a Grundy coloring of $G$ such that all singleton color classes appear at the end, resulting in a proper coloring that still uses the maximum number of colors.
    
    Consider a permutation $\sigma$ such that the first-fit coloring of $G$ with respect to $\sigma$ uses $\gamma$ colors. Let $(C_1, \ldots, C_\gamma)$ be the corresponding color classes. If this coloring satisfies the property stated in the lemma, then we are done. Suppose, that this is not the case. Then, there must exist indices $x$ and $w$ such that $w > x$, $|C_x| = 1$, and $|C_w| \geq 2$. Since $C_x$ is a singleton, the vertex $u \in C_x$ must be adjacent to at least one vertex in each of the color classes $C_y$ for all $y < x$. Additionally, the vertices in $C_z$ for all $z > x$ must also be adjacent to $u$. This implies that $u$ is adjacent to a vertex in each color class except $C_x$. Therefore, $(C_1, \ldots, C_{x-1}, C_{x+1}, \ldots, C_\gamma, C_x)$ is a Grundy coloring of $G$.
 
    By repeating this process for all singleton classes, we obtain a proper color classification in which all singleton classes appear at the end, which is the desired result and completes the proof. \qed    
\end{proof}

\section{Grundy Coloring with a Clique Modulator} \label{sec:clique}

In this section, we design a faster FPT algorithm for \textsc{Grundy Coloring} when parameterized by the size of the given clique modulator. Here, the input is a graph $G$ and a clique modulator $R$ of $G$ of size $r$.  Our main idea is to construct a permutation of vertices in $G$, which results in a maximum number of colors using first-fit coloring. 
We say that a Grundy coloring of $G$ is an optimal solution if the number of colors used is the Grundy number of $G$. 
We start describing the properties of an optimal solution with the following lemmas.

\begin{lemma} \label{lemma:singletons-at-end}
    Let $G$ be a graph, $R$ be a clique modulator of $G$, and $\gamma$ be the Grundy number of $G$. There is a Grundy coloring $(C_1,\ldots,C_{\gamma})$ of $G$ with the following property. There is an index $i\in \{1,\ldots,\gamma\}$ such that for each $j\leq i$, $C_j\cap R \neq \emptyset$, and for each $j>i$, $C_{j}\cap R =\emptyset$ and $|C_j|=1$.
\end{lemma}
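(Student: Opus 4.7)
The plan is to combine a structural observation coming from the clique modulator with a swap-to-the-end argument in the spirit of Lemma~\ref{lemma:singletons-at-the-end-G}. The key observation is that since $V(G) \setminus R$ induces a clique in $G$, every proper color class contains at most one vertex of $V(G) \setminus R$; in particular, any color class $C_j$ with $C_j \cap R = \emptyset$ is automatically a singleton $\{u\}$ with $u \in V(G) \setminus R$. This already supplies the ``$|C_j| = 1$'' half of the conclusion for classes disjoint from $R$, and it remains to arrange that all such classes appear at the tail of the ordering.

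Starting from any optimal Grundy coloring $(C_1, \ldots, C_\gamma)$, I would look for a pair of indices $x < w$ with $C_x \cap R = \emptyset$ and $C_w \cap R \neq \emptyset$; if no such pair exists, the lemma already holds. Writing $C_x = \{u\}$, the Grundy property applied to $u$ at color $x$ yields a neighbor of $u$ in each $C_y$ with $y < x$, while for any $z > x$ the Grundy property applied to any vertex of $C_z$ forces a neighbor in $C_x = \{u\}$; hence every vertex in $\bigcup_{z > x} C_z$ is adjacent to $u$. Thus $u$ has a neighbor in every color class other than $C_x$, which is exactly the Grundy condition needed to place $\{u\}$ at the end. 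Moreover, the Grundy condition for each class $C_y$ with $y > x$, now sitting in an earlier position, is inherited from its original validity because its predecessor set only shrinks, and each $C_y$ is still independent. So the rearranged partition $(C_1, \ldots, C_{x-1}, C_{x+1}, \ldots, C_\gamma, C_x)$ is a Grundy coloring of $G$ with $\gamma$ colors.

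Iterating this swap removes bad pairs one at a time and terminates in finitely many steps with a coloring of the desired form, using the same number of colors $\gamma$. I do not anticipate any real obstacle: the clique-modulator structure is used only to conclude that any color class disjoint from $R$ is a singleton in $V(G) \setminus R$, and once this is noted, the swap argument is essentially identical to that of Lemma~\ref{lemma:singletons-at-the-end-G}, with the distinction between singletons inside and outside $R$ being automatic.
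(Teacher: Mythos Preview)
Your proof is correct and follows essentially the same approach as the paper's: you observe that any color class disjoint from $R$ is automatically a singleton from the clique $V(G)\setminus R$, and then reuse the swap-to-the-end argument of Lemma~\ref{lemma:singletons-at-the-end-G} restricted to those classes. The paper's proof says precisely this, only more tersely.
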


\begin{proof}
     The proof follows a similar approach to that of Lemma~\ref{lemma:singletons-at-the-end-G}. Note that at most one vertex from the clique can be in any color class. By applying the same method, but focusing only on the color classes $C_{j}$ of size one that satisfies $C_{j}\cap R =\emptyset$, instead of considering all color classes $C_{j}$ of size one, we arrive at the desired result. \qed
\end{proof}

\begin{lemma} \label{lemma:arbitaryorderingsuffix}
    Let $G$ be a graph, $R$ be a clique modulator of size at most $r$, and $\gamma$ be the Grundy number of $G$. There is a vertex subset $Q\subseteq V(G)\setminus R$ of size $r$ with the following property. Let $\sigma_2$ be an arbitrary ordering of $V(G)\setminus (R\cup Q)$. Then, there exists an ordering $\sigma_1$ of $Q\cup R$ such that the first-fit coloring of $G$ with respect to $\sigma_1 \sigma_2$ uses $\gamma$ number of colors.
\end{lemma}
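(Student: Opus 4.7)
The plan is to extract $Q$ from the structure of an optimal Grundy coloring and then argue that the ``padding'' part of $Q$ absorbs the only freedom the adversary has. I would first apply Lemma~\ref{lemma:singletons-at-end} to fix an optimal Grundy coloring $(C_1,\ldots,C_\gamma)$ together with a threshold $i$ such that each $C_j$ with $j\leq i$ meets $R$, and each $C_j$ with $j>i$ is a single clique vertex $v_j$. Because $V(G)\setminus R$ is a clique and each $C_j$ is independent, each $C_j$ with $j\leq i$ contains at most one clique vertex, so the set $Q'=(C_1\cup\cdots\cup C_i)\setminus R$ of clique vertices in the non-singleton classes has size $|Q'|\leq i\leq|R|\leq r$. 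After handling the trivial case $|V(G)\setminus R|<r$ separately (at most $2r$ vertices, solvable directly), the identity $|V(G)\setminus R|=|Q'|+(\gamma-i)$ forces $\gamma-i\geq r-|Q'|$, so I can define
\[
Q \;=\; Q' \;\cup\; \{v_{i+1},\, v_{i+2},\, \ldots,\, v_{i+(r-|Q'|)}\},
\]
a subset of $V(G)\setminus R$ of size exactly $r$.

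Given an adversarial ordering $\sigma_2$ of $V(G)\setminus(R\cup Q)$, I define $\sigma_1$ independently of $\sigma_2$ by listing the vertices of $C_1$ (in any order), then $C_2$, up through $C_i$, and finally appending $v_{i+1},\ldots,v_{i+(r-|Q'|)}$ in this order. The heart of the proof is an induction along $\sigma_1\sigma_2$ establishing the invariant that, just before first-fit processes any vertex $u$, the colors already used by $u$'s processed neighbors form an \emph{initial segment} $\{1,\ldots,c-1\}$, so $u$ receives color $c$. For $u\in C_j$ with $j\leq i$ this is immediate from the Grundy property of $(C_1,\ldots,C_\gamma)$. For a padding vertex $u=v_{i+k}$, the Grundy property applied to its original color $i+k$ gives $R\cup Q'$-neighbors realizing each of the colors $1,\ldots,i$, while pairwise adjacency to the earlier padding vertices contributes colors $i+1,\ldots,i+k-1$; the upper bound on processed neighbors' colors is immediate because only vertices of colors $\leq i+k-1$ have been placed. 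For a vertex $u$ in $\sigma_2$, which lies in $\{v_{i+1},\ldots,v_\gamma\}\setminus Q$, the same Grundy argument applied to its original color supplies neighbors in $R\cup Q'$ covering $\{1,\ldots,i\}$, and clique-adjacency covers every color produced so far by the padding and the earlier entries of $\sigma_2$. Counting, the colors used by first-fit on $\sigma_1\sigma_2$ are exactly $\{1,\ldots,\gamma\}$.

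The main obstacle is capturing the sense in which the singletons can be reshuffled arbitrarily without losing colors: this is exactly the degree of freedom that the adversarial $\sigma_2$ exploits. What makes the invariant propagate is that any clique vertex $v_j$ with $j>i$ has, purely by the Grundy property, neighbors in $R\cup Q'$ hitting every one of $C_1,\ldots,C_i$, so the very first clique vertex processed after $R\cup Q'$ always sees a complete initial segment of forbidden colors; once that is established, the pairwise adjacency of clique vertices automatically extends the initial-segment invariant one step at a time through the tail of $\sigma_1$ and all of $\sigma_2$. I expect the subtle bookkeeping to be in matching $|Q'|$ against $i$ when choosing the number of padding singletons, which is why I separate out the case $|V(G)\setminus R|<r$ and reserve the inductive argument for the main regime.
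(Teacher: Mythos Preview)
Your proof is correct and follows essentially the same approach as the paper's: both take $Q$ to consist of the clique vertices lying in the first few color classes of the optimal Grundy coloring supplied by Lemma~\ref{lemma:singletons-at-end}, and both rely on the fact that the remaining singleton clique vertices can be processed in any order because each one already sees colors $1,\ldots,i$ through its neighbors in $C_1,\ldots,C_i$ and is adjacent to every previously processed clique vertex. Your version is more explicit---you spell out the initial-segment invariant and separately handle the degenerate case $|V(G)\setminus R|<r$ (the paper's own proof only obtains $|Q|\leq r$ and leaves the ``arbitrary $\sigma_2$ works'' step implicit)---but the underlying idea is the same.
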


\begin{proof}
    We begin by applying Lemma~\ref{lemma:singletons-at-end}, which guarantees the existence of a Grundy coloring $(C_1, \ldots, C_{\gamma})$ of $G$ such that there exists an index $i \in \{1, \ldots, \gamma\}$ with the following properties: For each $j \leq i$, $C_j \cap R \neq \emptyset$, and for each $j > i$, $C_j \cap R = \emptyset$ and $|C_j| = 1$. This implies that all singleton color classes, consisting of vertices from $V(G) \setminus R$, appear at the end of the coloring. Since each color class $C_j$ for $j \leq i$ contains at most one vertex from $V(G) \setminus R$, we define $Q = \bigcup_{j = 1}^r (C_j \setminus R)$. By construction, $Q$ contains at most $r$ vertices from $V(G) \setminus R$ within the first $r$ color classes, ensuring that $|Q| \leq r$.

    Observe that we have $i \leq r$. All remaining color classes $(C_{r+1}, \ldots, C_{\gamma})$ are singletons where $C_j \cap R = \emptyset$ for each $j \in \{r+1, \ldots, \gamma\}$. Therefore, we can define an arbitrary ordering $\sigma_2$ for the vertices in $V(G) \setminus (R \cup Q)$, corresponding to the color classes $(C_{r+1}, \ldots, C_{\gamma})$. We then construct the ordering $\sigma_1$ for the vertices in $Q \cup R$ based on the color classes $(C_1, \ldots, C_r)$, ensuring that it respects the first-fit coloring that resulted in the color classes $(C_1, \ldots, C_\gamma)$. Thus, by concatenating $\sigma_1$ and $\sigma_2$, the first-fit coloring of $G$ with respect to this ordering uses exactly $\gamma$ colors, completing the proof. \qed
\end{proof}

Let $S=V(G)\setminus R$. Notice that $S$ is a clique in $G$. Now we define an equivalence relation $\sim_R$ on $S$ as follows. For any two vertices $u,v\in S$, $u\sim_R v$ if and only if $N_G[u]=N_G[v]$. Let $E_1,\ldots,E_q$ be the equivalence classes of $\sim_R$. It is easy to see that $q\leq 2^{r}$. Now from each equivalence class $E_i$ arbitrarily select a subset $F_i\subseteq E_i$ of size $\min\{r,|E_i|\}$. Let $F=\bigcup_{i=1}^q F_i$. Constructing $F$ is efficient. For each vertex $v$, we determine its equivalence class by examining its neighbors in $R$. The vertex $v$ is included in $F$ only if its equivalence class within $F$ contains fewer than $r$ vertices. This procedure involves iterating over all vertices and the edges between $R$ and $S$, resulting in a time complexity of $O(n + m)$.

Now we prove the following lemma.  

\begin{lemma} \label{lemma:kernelclique}
    The Grundy number of $G$ is equal to the sum of the Grundy number of $G[R \cup F]$ and $|V(G) \setminus (R \cup F)|$. 
\end{lemma}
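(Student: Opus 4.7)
The plan is to show the equality by verifying both inequalities.

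For the lower bound $\mathrm{Grundy}(G) \geq \mathrm{Grundy}(G[R\cup F]) + |V(G)\setminus(R\cup F)|$, I would start with an optimal Grundy coloring $(C_1,\ldots,C_{\gamma'})$ of $G[R\cup F]$ and extend it by appending each vertex of $V(G)\setminus(R\cup F)$ as its own singleton color class at the end, in an arbitrary order. Because $V(G)\setminus R$ is a clique in $G$, each appended vertex $v$ is automatically adjacent to every $F$-vertex and to every singleton added before it, which handles every preceding class that meets $F$ as well as every earlier singleton. The only obstruction is that $v$ must also have a neighbor in each preceding class $C_k$ that consists entirely of $R$-vertices.

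The heart of the argument, and the step I expect to be the main obstacle, is establishing this remaining condition. Suppose $v \in E_a \setminus F_a$; then $|E_a| > r$, so by construction $|F_a| = r$. Every member of $F_a$ has the same closed neighborhood as $v$, so if $v$ has no neighbor in $C_k$, neither does any member of $F_a$. Since $F_a \subseteq V(G)\setminus R$ is a clique in $G[R\cup F]$, its $r$ members occupy $r$ pairwise distinct classes $C_{c_1},\ldots,C_{c_r}$, each carrying a unique $F$-vertex $v'_l$. The Grundy property applied to each $v'_l$ forces $c_l < k$ (otherwise $v'_l$ would itself need a neighbor in $C_k$, contradicting the assumption), and applied to any $u \in C_k$ it forces $u$ to have a neighbor in each $C_{c_l}$, which cannot be $v'_l \in F_a$ and therefore must lie in $C_{c_l}\cap R$. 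Hence the $r+1$ distinct classes $C_k, C_{c_1},\ldots,C_{c_r}$ all meet $R$, contradicting $|R|=r$.

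For the upper bound $\mathrm{Grundy}(G) \leq \mathrm{Grundy}(G[R\cup F]) + |V(G)\setminus(R\cup F)|$, I would begin with an optimal Grundy coloring of $G$ and invoke Lemma~\ref{lemma:singletons-at-end} to bring it into the form where some $i \leq r$ leading classes each contain a vertex of $R$ and the remaining classes are singletons from $V(G)\setminus R$. For every $s \in (V(G)\setminus(R\cup F)) \cap C_j$ with $j \leq i$, I would swap $s$ with some member of $F_a$ (where $s \in E_a$) that currently sits in a singleton class, using Lemma~\ref{lemma:swap-vertices} to preserve both the color count and the structural form. Such a partner must exist: $F_a$ has $r$ vertices in pairwise distinct classes, and if all of them lay in classes $\leq i$ they would together with $s$'s own class account for $r+1$ distinct classes among the first $i \leq r$. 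Once every vertex of $V(G)\setminus(R\cup F)$ sits in its own singleton class, these singletons contribute exactly $|V(G)\setminus(R\cup F)|$ to the total; restricting the rearranged coloring to $R\cup F$ then yields a Grundy coloring of $G[R\cup F]$ whose size absorbs the remaining $i+t$ classes, delivering the matching upper bound.
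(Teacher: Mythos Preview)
Your proposal is correct. The upper-bound half mirrors the paper exactly: both start from Lemma~\ref{lemma:singletons-at-end} and use Lemma~\ref{lemma:swap-vertices} to push every clique vertex outside $F$ out of the leading classes, then read off a Grundy coloring of $G[R\cup F]$ from what remains. The lower-bound half is where you diverge. The paper obtains that direction by invoking Lemma~\ref{lemma:arbitaryorderingsuffix} (the existence of a small $Q$ after which any suffix ordering is harmless) and threading the vertices of $F\setminus Q$ into the suffix. You instead prove directly, by a pigeonhole on $|R|=r$ versus $|F_a|=r$, that any Grundy coloring of $G[R\cup F]$ has no $R$-only class that is missed by a vertex of $V(G)\setminus(R\cup F)$, so the singletons can simply be appended. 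Your argument is more elementary and self-contained, and in fact establishes the slightly stronger ``observation'' the paper states at the top of its proof without justifying; the paper's route has the minor advantage of reusing Lemma~\ref{lemma:arbitaryorderingsuffix}, which is needed anyway for the running-time analysis in Theorem~\ref{theorem:clique-modulator-fpt}.
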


\begin{proof}
    The proof is based on the following observation. If $\sigma_2$ is an arbitrary ordering of $V(G)\setminus (R\cup F)$. Then for any ordering $\sigma_1$ of $G[R\cup F]$, the number of colors in the first-fit coloring of $G$ with respect to $\sigma_1\sigma_2$ is equal to the sum of the number of colors in the first-fit coloring of $G$ with respect to $\sigma_1$, and $|V(G)\setminus (R\cup F)|$.
    
    According to Lemma~\ref{lemma:singletons-at-end}, there exists a color classification $(C_1, \ldots, C_\gamma)$ using Grundy number of colors such that there is an index $i \in \{1,\ldots,\gamma\}$ where, for each $j \leq i$, we have $C_j \cap R \neq \emptyset$, and for each $j > i$, we have $C_j \cap R = \emptyset$ and $|C_j| = 1$. Since $C_j \cap R = \emptyset$ for all $j > i$, and $R$ contains $r$ vertices, we have $i \leq r$.

    Consider the first $r^\prime$ color classes, $(C_1, \ldots, C_{r^\prime})$, containing exactly $r$ vertices from the clique $S$. Since there are at least $r$ vertices in $S$, we have $r^\prime \geq r$. Using Lemma~\ref{lemma:swap-vertices}, we can swap any vertex from $S$ in these color classes with a vertex from the set $F$. Given that $F$ includes at least $\min\{r,|E_i|\}$ vertices from each equivalence class of $\sim_R$, this swap is feasible. After the swap, the vertices from $S$ in the first $r^\prime$ color classes are exactly those from $F$, and their number is $r$. Let $Q \subseteq F$ denote these $r$ vertices from $F$.

    The modified color classification $(C_1, \ldots, C_{r^\prime})$ now includes vertices from $R \cup Q$ in the first $r^\prime$ color classes, with the remaining color classes being singletons that can be ordered at the end. The vertices in $F \setminus Q$ can be placed as singleton color classes before the singleton color classes from $V(G) \setminus (R \cup F)$. Finally, with $|V(G) \setminus (R \cup F)|$ singleton vertices left, these can be assigned in any order at the end. Thus, the Grundy number of $G$ equals the sum of the Grundy number of $G[R \cup F]$, and $|V(G) \setminus (R \cup F)|$, as the first $r^\prime$ color classes include all non-singleton vertices from $R \cup Q \subseteq R \cup F$, and the remaining vertices are assigned their own colors. 

    The set $Q$ satisfies the necessary conditions of Lemma~\ref{lemma:arbitaryorderingsuffix}. Therefore, if $\sigma^\prime_2$ is an arbitrary ordering of $V(G) \setminus (R \cup Q)$, then there exists an ordering $\sigma^\prime_1$ of $Q \cup R$ such that the first-fit coloring of $G$ with respect to the concatenated order $\sigma^\prime_1 \sigma^\prime_2$ uses exactly $\gamma$ colors. To leverage this result, we choose $\sigma^\prime_2$ as an ordering in which the vertices of $F \setminus Q$ appear first. We now define $\sigma_1$ to be the concatenation of $\sigma^\prime_1$ with the first $|F \setminus Q|$ vertices from $\sigma^\prime_2$, effectively placing these vertices immediately after the vertices in $R \cup Q$. Finally, we define $\sigma_2$ as the ordering of the remaining vertices in $\sigma^\prime_2$. This construction ensures that the orderings $\sigma_1$ and $\sigma_2$ satisfy the desired conditions, and the proof is complete. \qed
\end{proof}

The following theorem follows from Lemma~\ref{lemma:kernelclique}. 

\begin{theorem}
    The \textsc{Grundy Coloring} problem parameterized by the size $r$ of a clique modulator admits a kernel of size $O(r2^{r})$.
\end{theorem}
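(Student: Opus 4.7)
The plan is to derive the kernel immediately from Lemma~\ref{lemma:kernelclique}, which already shows that the Grundy number of $G$ decomposes cleanly across the boundary between $R \cup F$ and the rest of $V(G)$. The kernelization algorithm simply outputs the reduced instance $(G[R \cup F], \ell - |V(G) \setminus (R \cup F)|)$, so the task is to (i) bound $|R \cup F|$, (ii) verify equivalence of the two instances, and (iii) check that the reduction can be carried out in polynomial time.

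First I would recall the construction of $F$ preceding Lemma~\ref{lemma:kernelclique}. Two vertices of $S = V(G) \setminus R$ are equivalent under $\sim_R$ precisely when they have the same neighborhood inside $R$ (their neighborhoods inside $S$ already agree, since $S$ is a clique). Hence each equivalence class is determined by a subset of $R$, giving at most $q \leq 2^{r}$ classes. From each class $E_i$ we keep a representative set $F_i$ of size $\min\{r, |E_i|\}$. Then $|F| \leq r \cdot 2^{r}$ and $|R \cup F| \leq r + r \cdot 2^{r} = O(r 2^{r})$, which is the claimed kernel size.

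Next I would argue equivalence. Given an input $(G, R, \ell)$, output the instance $(G[R \cup F], \ell')$ where $\ell' = \ell - |V(G) \setminus (R \cup F)|$. By Lemma~\ref{lemma:kernelclique}, the Grundy number $\gamma(G)$ satisfies
\[
\gamma(G) = \gamma(G[R \cup F]) + |V(G) \setminus (R \cup F)|,
\]
so $\gamma(G) \geq \ell$ if and only if $\gamma(G[R \cup F]) \geq \ell'$. Thus the two instances are equivalent. If $\ell' \leq 0$ we can directly return a trivial YES-instance of the required size.

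Finally I would verify efficiency. As noted in the paragraph preceding Lemma~\ref{lemma:kernelclique}, computing the equivalence classes and selecting $F$ takes $O(n + m)$ time: for each $v \in S$ we read off $N_G(v) \cap R$ in time proportional to its degree into $R$, bucket $v$ by that subset, and keep only the first $r$ vertices in each bucket. The only real content of the proof is invoking Lemma~\ref{lemma:kernelclique}; the rest is bookkeeping. There is no genuine obstacle here, since the structural work that makes the reduction sound has already been carried out in the earlier lemmas (Lemma~\ref{lemma:swap-vertices}, Lemma~\ref{lemma:singletons-at-end}, and Lemma~\ref{lemma:arbitaryorderingsuffix}).
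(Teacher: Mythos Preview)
Your proposal is correct and matches the paper's approach exactly: the paper simply states that the theorem ``follows from Lemma~\ref{lemma:kernelclique}'' without further argument, and your write-up spells out precisely the size bound, equivalence, and efficiency that this one-line remark is implicitly invoking. If anything, you have supplied more detail than the paper does.
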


\begin{theorem} \label{theorem:clique-modulator-fpt}
    There is an algorithm that given a graph $G$ and a clique modulator $R$ of size $r$, runs in time $O(2^{O(r^2)}+n+m))$ and outputs a Grundy coloring of $G$ using the maximum number of colors.
\end{theorem}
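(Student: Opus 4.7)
The plan is to combine the kernel from Lemma~\ref{lemma:kernelclique} with a brute-force enumeration of Grundy colorings of the kernel. First, I construct the set $F$ using the $O(n+m)$-time procedure described just above Lemma~\ref{lemma:kernelclique}: for every vertex $v \in V(G)\setminus R$, read off its neighborhood in $R$ to determine its $\sim_R$-equivalence class, and keep $v$ in $F$ only if that class currently holds fewer than $r$ retained vertices. The resulting kernel $G[R \cup F]$ has $|R \cup F| \leq r + r\cdot 2^r = O(r 2^r)$ vertices, and by Lemma~\ref{lemma:kernelclique} the Grundy number of $G$ equals the Grundy number of the kernel plus $|V(G)\setminus(R\cup F)|$. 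It thus suffices to compute the Grundy number of the kernel, together with a witnessing coloring, in time $2^{O(r^2)}$.

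For the kernel, I invoke Lemma~\ref{lemma:singletons-at-end}: some optimal coloring $(C_1,\ldots,C_{\gamma})$ has an index $i\leq r$ such that $C_1,\ldots,C_i$ partition all of $R$ (each class containing at least one vertex of $R$) while $C_{i+1},\ldots,C_{\gamma}$ are singletons from $F$. Since $F$ is a clique, each $C_j$ with $j\leq i$ contains at most one $F$-vertex, and by Lemma~\ref{lemma:swap-vertices} its identity is determined up to its $\sim_R$-equivalence class (of which there are $q\leq 2^r$). Consequently, the first $i$ classes are fully described by choosing $i\in\{1,\ldots,r\}$, an ordered partition of $R$ into $i$ non-empty parts, and for each part either an equivalence-class index in $\{1,\ldots,q\}$ or a ``no $F$-vertex'' marker. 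The number of such descriptions is bounded by $\sum_{i=1}^{r} i^{r}\cdot (q+1)^{i}\leq r\cdot r^{r}\cdot (2^r+1)^{r}=2^{O(r^2)}$.

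For each candidate description I check, in time polynomial in $r 2^r$, that (a)~each proposed class is independent (the $R$-part is independent in $G[R]$ and, if an $F$-vertex is present, its fixed neighborhood in $R$ is disjoint from the $R$-part); (b)~the first-fit property holds, i.e., every vertex of $C_j$ has a neighbor in every earlier $C_{j'}$; and (c)~no equivalence class contributes more $F$-vertices than it actually owns in $F$. I then count how many unused $F$-vertices may be appended as singletons: an unused $v\in E_k$ is appendable iff every $C_j$ with $j\leq i$ either contains an $F$-vertex (automatically adjacent to $v$ in the clique $F$) or contains a vertex of $N_G(v)\cap R$. If some equivalence class is not appendable, then every one of its members must already lie in $C_1,\ldots,C_i$, else the description is infeasible. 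The Grundy number of the kernel equals the maximum, over feasible descriptions, of $i$ plus the number of appendable unused $F$-vertices.

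The main obstacle is coupling the enumeration in step~(c) with the appendability check: the decision of how many $F$-vertices from each equivalence class to place in the first $i$ classes must be consistent with the singleton-feasibility of the remaining ones, and the accounting across the $q\leq 2^r$ equivalence classes needs care to stay within the $2^{O(r^2)}$ budget. Once an optimal description is found, I materialize the associated permutation on $R\cup F$ and append the $|V(G)\setminus(R\cup F)|$ remaining vertices in arbitrary order as trailing singletons, each contributing one additional color by Lemma~\ref{lemma:kernelclique}. Kernelization runs in $O(n+m)$ time and the enumeration-and-verification phase in $2^{O(r^2)}$ time, giving the overall bound $O(2^{O(r^2)}+n+m)$.
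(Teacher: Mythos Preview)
Your approach is correct and follows the same high-level strategy as the paper---construct the $O(r 2^r)$-size kernel $G[R\cup F]$ via Lemma~\ref{lemma:kernelclique}, then brute-force over the kernel in $2^{O(r^2)}$ time---but the brute-force step is organized differently. The paper applies Lemma~\ref{lemma:arbitaryorderingsuffix} to the kernel: it guesses a set $Q\subseteq F$ of size $r$ (at most $\binom{r2^r}{r}\leq r^r 2^{r^2}$ choices) together with a permutation of $Q\cup R$ (at most $(2r)!$ choices), runs first-fit on that permutation followed by an arbitrary fixed ordering of $F\setminus Q$, and records the maximum number of colors. Because first-fit is actually executed, all feasibility issues are absorbed automatically and no separate verification step is needed. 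Your route instead enumerates structured descriptions of the non-singleton prefix directly via Lemma~\ref{lemma:singletons-at-end} and Lemma~\ref{lemma:swap-vertices}, then verifies independence, the first-fit property, capacity constraints per equivalence class, and appendability of the leftover clique vertices. This also works, and the ``main obstacle'' you flag is not really one: your description already fixes, for every equivalence class $E_k$, exactly how many of its members occupy the first $i$ classes, so appendability is a single per-class test and a feasible description yields value $i$ plus the total number of unused $F$-vertices, with no further coupling needed. The paper's enumeration buys simplicity (just enumerate and run first-fit, no verification logic at all); yours buys a slightly more explicit combinatorial picture of the optimal coloring but at the cost of the extra bookkeeping you found yourself worrying about.
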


\begin{proof}
    Let $F\subseteq V(G)$ be the subset mentioned in Lemma~\ref{lemma:kernelclique}. Recall that $|F|\leq r 2^r$. Let $G^\prime=G[R\cup F]$. Moreover, by Lemma~\ref{lemma:kernelclique}, to prove the theorem it is enough to get an ordering $\sigma_1$ of $G^\prime$ such that first-fit coloring with respect to $\sigma_1$ uses the Grundy number of $G^\prime$ many colors. To get such an ordering $\sigma_1$, we use Lemma~\ref{lemma:arbitaryorderingsuffix}. By Lemma~\ref{lemma:arbitaryorderingsuffix}, there is a vertex subset $Q\subseteq V(G^\prime)\setminus R$ of size $r$ with the following property. Let $\sigma^\prime_2$ be an arbitrary ordering of $V(G^\prime)\setminus (R\cup Q)$. Then, there exists an ordering $\sigma^\prime_1$ of $Q\cup R$ such that the first-fit coloring of $G^\prime$ with respect to $\sigma^\prime_1 \sigma^\prime_2$ uses Grundy number of $G^\prime$ many colors.
    
    Since $V(G^\prime)\setminus R=F$ and $|F|\leq r2^r$, the number of choices for $Q$ is bounded by $\binom{r2^r}{r}$, which is upper bounded by $r^r2^{r^2}$. For each $Q$, the number of choices for $\sigma^\prime_1$ is upper bounded by $r^{2r}$. Thus trying all choices for $Q$ and $\sigma^\prime_1$ is upper bounded by $2^{O(r^2)}$. Thus in time $2^{O(r^2)}$, we can compute an ordering $\sigma_1$ such that the first fit coloring of $G^\prime$ with respect to $\sigma_1$ uses Grundy number of $G^\prime$ many colors. Since the construction of the set $F$ takes $O(n+m)$ time, the total running time is $O(2^{O(r^2)}+n+m)$. \qed
\end{proof}

\section{Grundy Coloring with a 2-Cluster Modulator} \label{sec:flow}

Next, we examine the parameter of a 2-cluster modulator. After
removing the 2-cluster modulator $R$ from $G$, the remaining graph $G - R$ forms a 2-cluster graph, meaning that $G - R$ consists of two connected components, $K_1$ and $K_2$, which are complete graphs. Let $S = V(K_1)$ and $S^\prime = V(K_2)$, where $S$ and $S^\prime$ are cliques in $G$. The key idea for this case is to reduce it to a max-flow problem. Let the size of the modulator be $|R| = r$. We begin with the following simple lemma.

\begin{lemma} \label{lemma:singletons-one-clique}
    Let $(C_1, \ldots, C_\gamma)$ be a Grundy coloring of $G$. Then, there do not exist two distinct color classes $C_i$ and $C_j$ such that $|C_i|=|C_j|=1$ and $C_i\subseteq S$ and $C_j\subseteq S^\prime$
\end{lemma}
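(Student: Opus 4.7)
The plan is to proceed by contradiction, exploiting the fact that vertices in $S$ and vertices in $S'$ cannot be adjacent in $G$. The only subtle point to set up is that, although vertices in $S$ and $S'$ may both have neighbors in the modulator $R$, there can be no direct edge between $S$ and $S'$ in $G$: since $S = V(K_1)$ and $S' = V(K_2)$ are the vertex sets of distinct connected components of $G-R$, and neither endpoint lies in $R$, any edge between them would survive the removal of $R$ and merge the two components, contradicting the definition of $S$ and $S'$.

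Given that observation, I would suppose for contradiction that such singleton classes exist, say $C_i = \{u\}$ with $u \in S$ and $C_j = \{v\}$ with $v \in S'$, and without loss of generality assume $i < j$. By the defining property of a Grundy coloring recalled in the introduction, every vertex in $C_j$ must have a neighbor in each earlier color class $C_k$ with $k < j$; applying this to $v \in C_j$ and the class $C_i$, we need $N_G(v) \cap C_i \neq \emptyset$. Since $C_i = \{u\}$, this forces $\{u,v\} \in E(G)$.

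This, however, directly contradicts the preliminary observation that no edge exists between $S$ and $S'$ in $G$, since $u \in S$ and $v \in S'$. Therefore no two such singleton classes can coexist, which is exactly the claim of the lemma.

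There is no real obstacle here; the lemma is essentially a one-line consequence of the Grundy property together with the structural fact that $S$ and $S'$ are in different components of $G - R$. The only thing worth being careful about is justifying the absence of $S$–$S'$ edges cleanly, so I would state that as an explicit observation at the start of the proof before running the contradiction argument.
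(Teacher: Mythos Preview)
Your proposal is correct and follows essentially the same contradiction argument as the paper: assume $i<j$, use the Grundy property to force an edge between the unique vertex of $C_j\subseteq S'$ and the unique vertex of $C_i\subseteq S$, and note that no such edge can exist since $S$ and $S'$ lie in different components of $G-R$. Your write-up is slightly more explicit in justifying the absence of $S$--$S'$ edges, but the underlying idea is identical.
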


\begin{proof}
    Assume that there exist color classes $C_i$ and $C_j$ such that $|C_i| = |C_j| = 1$ such that the element $x \in C_i$ is a vertex of $S$, and the element $y \in C_j$ is a vertex of $S^\prime$. Without loss of generality, assume that $1 \leq i < j \leq \gamma$. Therefore, since $y$ is colored with a number greater than $i$, due to the definition of first-fit coloring, it must have an edge to a vertex with color $i$, which can only be $x$. But since $x$ and $y$ are from two disjoint cliques with no edges between them, this is not possible. This contradiction completes the proof. \qed
\end{proof}

We now define an equivalence relation $\sim_R$ on the set $S \cup S^\prime$. Specifically, for any two vertices $u, v \in S \cup S^\prime$, we say $u \sim_R v$ if and only if they have the same closed neighborhood, i.e., $N_G[u] = N_G[v]$. The equivalence classes of $\sim_R$ are denoted $E_1, \ldots, E_q$ and $E^\prime_1, \ldots, E^\prime_{q^\prime}$, where $E_i$'s come from the clique $S$ and $E^\prime_j$'s come from the clique $S^\prime$. Similar to the case of a clique modulator, we have $q + q^\prime \leq 2^{r+1}$. Now from each equivalence class $E_i$ arbitrarily select a subset $F_i\subseteq E_i$ of size $\min\{r,|E_i|\}$. Also, for each equivalence class $E^\prime_j$ arbitrarily select a subset $F^\prime_j\subseteq E^\prime_j$ of size $\min\{r,|E^\prime_j|\}$. Let $F = \Big( \bigcup_{j} F_j \Big) \cup \Big( \bigcup_i F^\prime_i \Big)$. Notice that $|F|\leq r2^{r+1}$. 

\begin{lemma} \label{lemma:R_colorclasses-2}
    There is a vertex subset $Q\subseteq F$ of size at most $2r$ and a (not necessarily optimal) Grundy coloring $(C^\prime_1, \ldots, C^\prime_{\gamma^\prime})$ of $G[Q\cup R]$ with the following property. There is an optimal Grundy coloring $(C_1, \ldots, C_{\gamma})$ of $G$ such that $(C^\prime_1, \ldots, C^\prime_{\gamma^\prime})$ is a subsequence of $(C_1, \ldots, C_{\gamma})$.
\end{lemma}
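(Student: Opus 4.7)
The plan is to start from an arbitrary optimal Grundy coloring $(C_1, \ldots, C_\gamma)$ of $G$ and isolate the color classes that interact with $R$. Since $|R| = r$, at most $r$ of the classes can contain a vertex from $R$; call these classes in order $C_{i_1}, \ldots, C_{i_t}$ with $t \leq r$. Because $S$ and $S'$ are cliques, every color class is independent and therefore contains at most one vertex of $S$ and at most one vertex of $S'$. Hence each $C_{i_j}$ contributes at most two vertices to $V(G)\setminus R$, for a total of at most $2r$ such vertices across $C_{i_1}, \ldots, C_{i_t}$.

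Next I would rewrite each $C_{i_j}$ so that its non-$R$ vertices lie inside $F$. For every $v \in C_{i_j} \cap (S\cup S')$ that is not already in $F$, pick a vertex $v'$ from the same equivalence class of $\sim_R$ as $v$ that lies in $F$, and swap $v$ with $v'$ in the partition. By Lemma~\ref{lemma:swap-vertices}, applied to the sorted permutation corresponding to the coloring (via Lemma~\ref{lemma:perm-to-colorclass}), each such swap yields another optimal Grundy coloring. The crucial feasibility check is that the vertices $v'$ can be chosen distinctly: for a fixed equivalence class $E_i \subseteq S$, each color class contains at most one vertex of $E_i$ (since $S$ is a clique), so the number $n_i$ of classes among $C_{i_1}, \ldots, C_{i_t}$ meeting $E_i$ satisfies $n_i \leq \min\{t,|E_i|\} \leq \min\{r,|E_i|\} = |F_i|$. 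Thus $F_i$ always supplies enough fresh substitutes, and the same applies to each $E'_i$.

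After all swaps the resulting partition is still an optimal Grundy coloring of $G$; I keep calling it $(C_1, \ldots, C_\gamma)$. Define
\[
Q = \bigcup_{j=1}^{t} \bigl(C_{i_j} \cap (S\cup S')\bigr).
\]
Then $Q \subseteq F$, $|Q| \leq 2t \leq 2r$, and $C_{i_1}\cup\cdots\cup C_{i_t} = R\cup Q$. I claim that $(C_{i_1}, \ldots, C_{i_t})$ is a Grundy coloring of $G[R\cup Q]$: each $C_{i_j}$ remains an independent set, and for every $v \in C_{i_j}$ and every $j' < j$, the Grundy property of the original coloring of $G$ gives a neighbor of $v$ in $C_{i_{j'}}$, which still lies in $R\cup Q$ and is hence an edge of $G[R\cup Q]$. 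Taking $(C'_1, \ldots, C'_{\gamma'}) := (C_{i_1}, \ldots, C_{i_t})$ therefore furnishes both the required subsequence relation and the promised Grundy coloring of $G[Q\cup R]$.

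The main point that needs careful bookkeeping is the swap step: one must verify that the substitute $v'$ drawn from $F_i$ is not already sitting in another of the classes $C_{i_1}, \ldots, C_{i_t}$, which is exactly the inequality $n_i \leq |F_i|$ established above. Everything else, in particular the check that restricting to $G[R\cup Q]$ preserves the Grundy property, is immediate because we only discard color classes that are disjoint from $R$ and therefore irrelevant to the neighbor conditions for vertices in $C_{i_1}, \ldots, C_{i_t}$.
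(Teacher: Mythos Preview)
Your proof is correct and follows essentially the same approach as the paper: isolate the at most $r$ color classes meeting $R$, use Lemma~\ref{lemma:swap-vertices} to replace their non-$R$ vertices by equivalent vertices from $F$, and take $Q$ to be the resulting $\leq 2r$ vertices. You are in fact more explicit than the paper on two points---the counting argument $n_i \le |F_i|$ that guarantees enough fresh substitutes, and the verification that the subsequence $(C_{i_1},\ldots,C_{i_t})$ is itself a Grundy coloring of $G[Q\cup R]$---both of which the paper leaves implicit.
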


\begin{proof}
    Consider an optimal Grundy coloring $(\widehat{C}_1, \ldots, \widehat{C}_\gamma)$ of $G$. Let $(\widehat{C}_{\alpha_1}, \ldots, \widehat{C}_{\alpha_{r^\prime}})$ be the color classes that contain at least one vertex from $R$. Clearly, $r^\prime \leq r$, as there are at most $r$ such color classes. Furthermore, note that each of the color classes $\widehat{C}_{\alpha_1}, \ldots, \widehat{C}_{\alpha_{r^\prime}}$ can contain at most two vertices from $V(G) \setminus R$, since each class contains at most one vertex from each clique in $G - R$. 
    
    By Lemma~\ref{lemma:swap-vertices}, we can swap the vertices of $(\widehat{C}_{\alpha_1}, \ldots, \widehat{C}_{\alpha_{r^\prime}})$ that belong to $V(G) \setminus (R\cup F)$ with vertices from $F$ that are in the same equivalence classes under $\sim_R$. This is feasible because $F$ contains $\min\{r,|E_i|\}$ and $\min\{r,|E_j^\prime|\}$ vertices from each equivalence class $E_i$ and $E_j^\prime$, respectively, and each color class contains at most one vertex from the clique $S$ or $S^\prime$, ensuring that the swap maintains the structure of the coloring.

    After performing these swaps, we obtain a new Grundy coloring $(C_1, \ldots, C_\gamma)$, and $(C_{\alpha_1}, \ldots, C_{\alpha_{r^\prime}})$ denote the color classes that contain the vertices from $R$. Moreover, for each $i\in [r^\prime]$, $C_{\alpha_i}\subseteq R\cup F$. Let  $Q= (C_{\alpha_1}\cup \ldots \cup  C_{\alpha_{r^\prime}})\cap F$. Notice that $|Q|\leq 2r$. Thus, $(C_1, \ldots, C_\gamma)$ is the desired optimal Grundy coloring. \qed 
\end{proof}

\begin{definition} \label{def:ext}
    A Grundy coloing $(C^\prime_1, \ldots, C^\prime_{\gamma^\prime})$ of $G[Q\cup R]$ is called {\em extendable} if there is an optimal Grundy coloring $(C_1, \ldots, C_{\gamma})$ of $G$ such that $(C_1^\prime, \ldots, C^\prime_{\gamma^\prime})$ is a subsequence of $(C_1, \ldots, C_{\gamma})$.
\end{definition}

\begin{lemma} \label{lemma:grundy-calc-2}
    Let $(C^\prime_1, \ldots, C^\prime_{\gamma^\prime})$ be a (not necessarily optimal) Grundy coloring of $G[Q\cup R]$, that is extendable. Let $\beta= \max\{|S\setminus Q|, |S^\prime\setminus Q|\}$. Then, Grundy number of $G$ is $\gamma^\prime+\beta$.    
\end{lemma}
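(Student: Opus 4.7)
The plan is to exploit extendability to extract the exact structure of an optimal Grundy coloring of $G$. By Definition~\ref{def:ext}, there exist an optimal Grundy coloring $(C_1, \ldots, C_\gamma)$ of $G$ and indices $j_1 < \cdots < j_{\gamma'}$ such that $C_{j_i} = C'_i$ as sets for every $i \in [\gamma']$. Since each $C'_i \subseteq Q \cup R$, the classes $C_{j_1}, \ldots, C_{j_{\gamma'}}$ together contain exactly the vertices of $Q \cup R$; consequently every remaining ``extra'' class $C_j$ (with $j \notin \{j_1, \ldots, j_{\gamma'}\}$) is contained in $V(G) \setminus (Q \cup R) = (S \setminus Q) \cup (S' \setminus Q)$. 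The target equality $\gamma = \gamma' + \beta$ therefore reduces to proving that the number of extra classes equals $\beta$.

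Next I would count the extra classes combinatorially. Each such class is an independent set in $G$, and because $S$ and $S'$ are disjoint cliques with no edges between them, it can contain at most one vertex from $S \setminus Q$ and at most one from $S' \setminus Q$. Writing $s = |S \setminus Q|$, $s' = |S' \setminus Q|$, and letting $a$, $b$, $c$ denote the number of extra classes that are, respectively, pairs (one vertex from each side), singletons in $S \setminus Q$, and singletons in $S' \setminus Q$, the vertex-count identities $a + b = s$ and $a + c = s'$ hold, so $\gamma - \gamma' = a + b + c = s + s' - a$. It thus suffices to show $a = \min\{s, s'\}$.

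The key lever is Lemma~\ref{lemma:singletons-one-clique}, which forbids the global coloring $(C_1, \ldots, C_\gamma)$ from containing both a singleton inside $S$ and a singleton inside $S'$; in particular $b$ and $c$ cannot both be positive. A short case split then pins down $a$ uniquely. If $s > s'$, taking $b = 0$ would force $a = s$ and then $c = s' - s < 0$, which is impossible, so $b > 0$, whence $c = 0$ and $a = s'$. If $s < s'$ the situation is symmetric and $a = s$. If $s = s'$, both $b = c = 0$ and $a = s = s'$. In every feasible configuration $a = \min\{s, s'\}$, giving $\gamma - \gamma' = \max\{s, s'\} = \beta$, which is the desired conclusion. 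The main subtlety I anticipate, and the expected obstacle, is ensuring that the optimal coloring guaranteed by extendability necessarily lands in the configuration consistent with the relative sizes of $s$ and $s'$; Lemma~\ref{lemma:singletons-one-clique} is precisely what rules out the alternative configurations, since those would either demand negative counts or produce a forbidden pair of $S$- and $S'$-singletons.
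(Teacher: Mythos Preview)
Your argument is correct and follows essentially the same route as the paper: use extendability to locate the $\gamma'$ subsequence classes inside an optimal coloring, observe that the remaining classes lie in $(S\setminus Q)\cup(S'\setminus Q)$, and invoke Lemma~\ref{lemma:singletons-one-clique} to force all singletons to come from the larger side. Your explicit $a,b,c$ bookkeeping just spells out what the paper compresses into one sentence, and in fact your version is slightly more careful (the paper writes $V(G)\setminus R$ where $V(G)\setminus(Q\cup R)$ is what is actually needed for the count).
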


\begin{proof}
    Since the sequence of color classes $(C^\prime_1, \ldots, C^\prime_{\gamma^\prime})$ is extendable, there exists an optimal Grundy coloring $(C_1, \ldots, C_{\gamma})$ of $G$ such that $(C^\prime_1, \ldots, C^\prime_{\gamma^\prime})$ is a subsequence of $(C_1, \ldots, C_{\gamma})$. Furthermore, note that the color classes which appear in $(C_1, \ldots, C_{\gamma})$ but not in $(C^\prime_1, \ldots, C^\prime_{\gamma^\prime})$ consist exclusively of vertices from $V(G) \setminus R$.

    By Lemma~\ref{lemma:singletons-one-clique}, these remaining color classes are either composed of two vertices (one from each clique) or a single vertex (from the clique with the greater number of remaining vertices). Hence, the total number of these additional color classes corresponds to $\beta = \max\{|S \setminus Q|, |S^\prime \setminus Q|\}$, where $S$ and $S^\prime$ are the vertex sets of the cliques in $G - R$. Consequently, the Grundy number of $G$ is given by $\gamma = \gamma^\prime + \beta$, where $\gamma^\prime$ represents the number of color classes in the subsequence $(C^\prime_1, \ldots, C^\prime_{\gamma^\prime})$. \qed
\end{proof}

Because of Lemma~\ref{lemma:R_colorclasses-2}, the number of choices for $Q$ is at most $2^{O(r^2)}$. For each such choice the number of Grundy colorings of $G[Q\cup R]$ is $2^{O(r \log r)}$ because $|Q\cup R|\leq 3r$. Thus, we can guess the correct choice for $Q$ and a Grundy coloring $(C^\prime_1, \ldots, C^\prime_{\gamma^\prime})$ of $G[Q\cup R]$, that is extendable in time $2^{O(r^2)}$. Thus our next job is to test whether $(C^\prime_1, \ldots, C^\prime_{\gamma^\prime})$ is indeed extendable. Towards that, we construct a flow network and prove that the network has a {\em large} flow if and only if $(C^\prime_1, \ldots, C^\prime_{\gamma^\prime})$ is extendable. 

\begin{theorem} \label{theorem:poly-extendable-2}
    There is an algorithm that given a graph $G$ and a Grundy coloring $(C^\prime_1, \ldots, C^\prime_{\gamma^\prime})$ of $G[Q\cup R]$, runs in time $O(n^6)$ and decides if $(C^\prime_1, \ldots, C^\prime_{\gamma^\prime})$ is extendable. 
\end{theorem}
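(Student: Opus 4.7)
The plan is to reduce extendability to a bounded number of bipartite max-flow instances. First I would pin down the structure of any extension. By Lemma~\ref{lemma:grundy-calc-2}, an extension adds exactly $\beta=\max(|S\setminus Q|,|S'\setminus Q|)$ new color classes; since $S$ and $S'$ lie in distinct components of $G-R$ (so there are no edges between them), Lemma~\ref{lemma:singletons-one-clique} forces each new class to be either a doubleton $\{v,v'\}$ with $v\in S\setminus Q$, $v'\in S'\setminus Q$, or a singleton from the larger clique (WLOG $S\setminus Q$). Moreover, the $S'$-vertex of any doubleton has no neighbor in a preceding singleton new class, so all doubletons must precede all singletons in the final sequence.

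Next I would precompute a handful of quantities over the fixed coloring of $G[Q\cup R]$. For each $v\in S\setminus Q$ let $\lambda(v)$ be the largest gap $g\in\{0,\ldots,\gamma'\}$ such that $v$ has a neighbor in every $C'_j$ with $j\le g$, and let $\mu(v)$ be the smallest gap $g$ such that $v$ is adjacent to every $R$-vertex in every $C'_j$ with $j>g$; define $\lambda'(v')$ analogously for $v'\in S'\setminus Q$. Let $i^*$ be the largest index $i$ with $C'_i\cap S'\cap Q\ne\emptyset$, observing that any singleton new class must be placed at gap $\ge i^*$ so that the $S'\cap Q$-vertices of earlier classes keep a neighbor in every later class. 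For each pair $(v,v')$ let $g^\dagger(v,v')$ be the smallest gap at which $\{v,v'\}$ jointly dominates every $R$-vertex in every later $C'_j$. All of these can be precomputed in $O(n^3)$ time. I would then enumerate the threshold $g^*\in\{-1,0,\ldots,\gamma'\}$, the largest gap containing a doubleton (or $-1$ if no doubletons). For each $g^*$, call $v\in S\setminus Q$ \emph{singleton-feasible} iff $\max(i^*,g^*,\mu(v))\le\lambda(v)$, and let $\mathrm{Forced}$ consist of the vertices of $S\setminus Q$ that are not singleton-feasible. Build a bipartite graph $H_{g^*}$ on $(S'\setminus Q)\cup(S\setminus Q)$ with an edge $\{v',v\}$ iff $g^\dagger(v,v')\le\min(g^*,\lambda(v),\lambda'(v'))$, i.e.\ some gap $\le g^*$ is feasible for the doubleton $\{v,v'\}$. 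The claim is that $(C'_1,\ldots,C'_{\gamma'})$ is extendable iff for some $g^*$ the graph $H_{g^*}$ admits a matching that saturates $S'\setminus Q$ and covers every vertex of $\mathrm{Forced}$, which is decided by one max-flow instance with unit capacities and lower bound $1$ on the edges leaving $\mathrm{Forced}$ vertices.

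The main obstacle is proving this equivalence. The forward direction reads $g^*$, the doubleton pairing, and a concrete gap for each new class off of any valid extension; the backward direction constructs an extension from such a matching by independently assigning each matched pair $(v',v)$ any gap in $[g^\dagger(v,v'),\min(g^*,\lambda(v),\lambda'(v'))]$ and each unmatched (singleton) $v\in S\setminus Q$ any gap in $[\max(i^*,g^*,\mu(v)),\lambda(v)]$. The critical observation is that distinct new classes never obstruct one another: every two $S$-vertices are adjacent via the clique $S$ and every two $S'$-vertices via the clique $S'$, so each new class automatically has a neighbor in every preceding new class of the relevant type, and the constraints imposed by the original classes decouple into exactly those captured by $\lambda,\lambda',\mu,i^*,g^\dagger$. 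Since each max-flow instance has $O(n)$ vertices and $O(n^2)$ edges and is solvable in $O(n^3)$ time, and there are $O(n)$ thresholds to try, the total running time is $O(n^4)$, comfortably within the claimed $O(n^6)$ bound.
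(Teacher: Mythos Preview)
Your approach is correct and lands on essentially the same reduction as the paper: recognize that every new color class is either a doubleton $\{v,v'\}$ with $v\in S\setminus Q$, $v'\in S'\setminus Q$ or a singleton from the larger side, and decide feasibility by a bipartite matching (solved via max-flow) between the leftover vertices of the two cliques. The key structural observation---that two new classes never obstruct one another because each clique is complete, so all constraints decouple into per-class conditions against the fixed $C'_j$'s---is exactly what makes the flow reduction work, and you state it cleanly.

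The main divergence is packaging. The paper builds a \emph{single} flow network: it introduces an explicit layer of gap-vertices $v_{i\lambda}$ so that the flow itself selects the gap for each doubleton, and it pads the $S'$ side with ``null'' dummy vertices that can only match to $u_i$'s for which the singleton $\{u_i\}$ is feasible at the very last gap $\gamma'$. Extendability then reduces to checking whether the max flow equals $s=|S\setminus Q|$. You instead precompute feasibility \emph{intervals} $[\,g^\dagger,\min(\lambda,\lambda')\,]$ and $[\,\max(i^*,g^*,\mu),\lambda\,]$, drop the gap layer entirely, and phrase the test as a matching that must saturate $S'\setminus Q$ and the ``Forced'' vertices. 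These two encodings are equivalent: the paper's dummy-vertex trick is just another way to impose your saturation constraint on Forced.

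Your outer loop over $g^*$ is, however, unnecessary. By Lemma~\ref{lemma:singletons-at-the-end-G}, any singleton new class in an extension can be moved to the very end while keeping $(C'_1,\ldots,C'_{\gamma'})$ as a subsequence, so the choice $g^*=\gamma'$ always succeeds whenever any $g^*$ does. With that observation your algorithm collapses to a single matching instance, matching the paper's construction; what you gain in exchange is a tighter running-time bound and the explicit interval structure $(\lambda,\lambda',\mu,g^\dagger)$, which the paper leaves implicit in its per-edge feasibility check.
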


\begin{proof}
    The algorithm we propose is based on a flow network. Without loss of generality, assume that $|S \setminus Q| \geq |S^\prime \setminus Q|$. Let the vertices in $S \setminus Q$ be $u_1, \ldots, u_s$, and the vertices in $S^\prime \setminus Q$ be $u_1^\prime, \ldots, u_{s^\prime}^\prime$. Our flow network consists of $s + 1$ main sections that form a bipartite subgraph, $s$ auxiliary vertices, a source $v_{\text{source}}$ that is adjacent to one part, and a sink $v_{\text{sink}}$ that is adjacent to the second part.

    \begin{figure}[t]
    \centering
        \scalebox{0.65}{
            \begin{tikzpicture}[x=0.75pt,y=0.75pt,yscale=-1,xscale=1]
                \draw   (80,309) .. controls (80,304.58) and (83.58,301) .. (88,301) -- (572,301) .. controls (576.42,301) and (580,304.58) .. (580,309) -- (580,333) .. controls (580,337.42) and (576.42,341) .. (572,341) -- (88,341) .. controls (83.58,341) and (80,337.42) .. (80,333) -- cycle ;
                \draw  [color={rgb, 255:red, 0; green, 0; blue, 0 }  ,draw opacity=1 ][fill={rgb, 255:red, 0; green, 0; blue, 0 }  ,fill opacity=1 ][line width=1.5]  (329.03,170.98) .. controls (329.03,170.44) and (329.47,170) .. (330.02,170) .. controls (330.56,170) and (331,170.44) .. (331,170.98) .. controls (331,171.53) and (330.56,171.97) .. (330.02,171.97) .. controls (329.47,171.97) and (329.03,171.53) .. (329.03,170.98) -- cycle ;
                \draw  [color={rgb, 255:red, 0; green, 0; blue, 0 }  ,draw opacity=1 ][fill={rgb, 255:red, 0; green, 0; blue, 0 }  ,fill opacity=1 ][line width=1.5]  (329.03,419.98) .. controls (329.03,419.44) and (329.47,419) .. (330.02,419) .. controls (330.56,419) and (331,419.44) .. (331,419.98) .. controls (331,420.53) and (330.56,420.97) .. (330.02,420.97) .. controls (329.47,420.97) and (329.03,420.53) .. (329.03,419.98) -- cycle ;
                \draw   (240,217.8) .. controls (240,213.39) and (243.58,209.8) .. (248,209.8) -- (412,209.8) .. controls (416.42,209.8) and (420,213.39) .. (420,217.8) -- (420,241.8) .. controls (420,246.22) and (416.42,249.8) .. (412,249.8) -- (248,249.8) .. controls (243.58,249.8) and (240,246.22) .. (240,241.8) -- cycle ;
                \draw  [color={rgb, 255:red, 0; green, 0; blue, 0 }  ,draw opacity=1 ][fill={rgb, 255:red, 0; green, 0; blue, 0 }  ,fill opacity=1 ][line width=1.5]  (144.03,360.98) .. controls (144.03,360.44) and (144.47,360) .. (145.02,360) .. controls (145.56,360) and (146,360.44) .. (146,360.98) .. controls (146,361.53) and (145.56,361.97) .. (145.02,361.97) .. controls (144.47,361.97) and (144.03,361.53) .. (144.03,360.98) -- cycle ;
                \draw  [color={rgb, 255:red, 0; green, 0; blue, 0 }  ,draw opacity=1 ][fill={rgb, 255:red, 0; green, 0; blue, 0 }  ,fill opacity=1 ][line width=1.5]  (274.03,359.98) .. controls (274.03,359.44) and (274.47,359) .. (275.02,359) .. controls (275.56,359) and (276,359.44) .. (276,359.98) .. controls (276,360.53) and (275.56,360.97) .. (275.02,360.97) .. controls (274.47,360.97) and (274.03,360.53) .. (274.03,359.98) -- cycle ;
                \draw  [color={rgb, 255:red, 0; green, 0; blue, 0 }  ,draw opacity=1 ][fill={rgb, 255:red, 0; green, 0; blue, 0 }  ,fill opacity=1 ][line width=1.5]  (514.03,359.98) .. controls (514.03,359.44) and (514.47,359) .. (515.02,359) .. controls (515.56,359) and (516,359.44) .. (516,359.98) .. controls (516,360.53) and (515.56,360.97) .. (515.02,360.97) .. controls (514.47,360.97) and (514.03,360.53) .. (514.03,359.98) -- cycle ;
                \draw  [color={rgb, 255:red, 0; green, 0; blue, 0 }  ,draw opacity=1 ][fill={rgb, 255:red, 0; green, 0; blue, 0 }  ,fill opacity=1 ][line width=1.5]  (269.03,229.98) .. controls (269.03,229.44) and (269.47,229) .. (270.02,229) .. controls (270.56,229) and (271,229.44) .. (271,229.98) .. controls (271,230.53) and (270.56,230.97) .. (270.02,230.97) .. controls (269.47,230.97) and (269.03,230.53) .. (269.03,229.98) -- cycle ;
                \draw  [color={rgb, 255:red, 0; green, 0; blue, 0 }  ,draw opacity=1 ][fill={rgb, 255:red, 0; green, 0; blue, 0 }  ,fill opacity=1 ][line width=1.5]  (289.03,229.98) .. controls (289.03,229.44) and (289.47,229) .. (290.02,229) .. controls (290.56,229) and (291,229.44) .. (291,229.98) .. controls (291,230.53) and (290.56,230.97) .. (290.02,230.97) .. controls (289.47,230.97) and (289.03,230.53) .. (289.03,229.98) -- cycle ;
                \draw  [color={rgb, 255:red, 0; green, 0; blue, 0 }  ,draw opacity=1 ][fill={rgb, 255:red, 0; green, 0; blue, 0 }  ,fill opacity=1 ][line width=1.5]  (389.03,229.98) .. controls (389.03,229.44) and (389.47,229) .. (390.02,229) .. controls (390.56,229) and (391,229.44) .. (391,229.98) .. controls (391,230.53) and (390.56,230.97) .. (390.02,230.97) .. controls (389.47,230.97) and (389.03,230.53) .. (389.03,229.98) -- cycle ;
                \draw  [color={rgb, 255:red, 0; green, 0; blue, 0 }  ,draw opacity=1 ][fill={rgb, 255:red, 0; green, 0; blue, 0 }  ,fill opacity=1 ][line width=1.5]  (119.03,319.98) .. controls (119.03,319.44) and (119.47,319) .. (120.02,319) .. controls (120.56,319) and (121,319.44) .. (121,319.98) .. controls (121,320.53) and (120.56,320.97) .. (120.02,320.97) .. controls (119.47,320.97) and (119.03,320.53) .. (119.03,319.98) -- cycle ;
                \draw  [color={rgb, 255:red, 0; green, 0; blue, 0 }  ,draw opacity=1 ][fill={rgb, 255:red, 0; green, 0; blue, 0 }  ,fill opacity=1 ][line width=1.5]  (170.03,319.98) .. controls (170.03,319.44) and (170.47,319) .. (171.02,319) .. controls (171.56,319) and (172,319.44) .. (172,319.98) .. controls (172,320.53) and (171.56,320.97) .. (171.02,320.97) .. controls (170.47,320.97) and (170.03,320.53) .. (170.03,319.98) -- cycle ;
                \draw  [color={rgb, 255:red, 0; green, 0; blue, 0 }  ,draw opacity=1 ][fill={rgb, 255:red, 0; green, 0; blue, 0 }  ,fill opacity=1 ][line width=1.5]  (249.03,319.98) .. controls (249.03,319.44) and (249.47,319) .. (250.02,319) .. controls (250.56,319) and (251,319.44) .. (251,319.98) .. controls (251,320.53) and (250.56,320.97) .. (250.02,320.97) .. controls (249.47,320.97) and (249.03,320.53) .. (249.03,319.98) -- cycle ;
                \draw  [color={rgb, 255:red, 0; green, 0; blue, 0 }  ,draw opacity=1 ][fill={rgb, 255:red, 0; green, 0; blue, 0 }  ,fill opacity=1 ][line width=1.5]  (299.03,319.98) .. controls (299.03,319.44) and (299.47,319) .. (300.02,319) .. controls (300.56,319) and (301,319.44) .. (301,319.98) .. controls (301,320.53) and (300.56,320.97) .. (300.02,320.97) .. controls (299.47,320.97) and (299.03,320.53) .. (299.03,319.98) -- cycle ;
                \draw  [color={rgb, 255:red, 0; green, 0; blue, 0 }  ,draw opacity=1 ][fill={rgb, 255:red, 0; green, 0; blue, 0 }  ,fill opacity=1 ][line width=1.5]  (489.03,319.98) .. controls (489.03,319.44) and (489.47,319) .. (490.02,319) .. controls (490.56,319) and (491,319.44) .. (491,319.98) .. controls (491,320.53) and (490.56,320.97) .. (490.02,320.97) .. controls (489.47,320.97) and (489.03,320.53) .. (489.03,319.98) -- cycle ;
                \draw  [color={rgb, 255:red, 0; green, 0; blue, 0 }  ,draw opacity=1 ][fill={rgb, 255:red, 0; green, 0; blue, 0 }  ,fill opacity=1 ][line width=1.5]  (539.03,319.98) .. controls (539.03,319.44) and (539.47,319) .. (540.02,319) .. controls (540.56,319) and (541,319.44) .. (541,319.98) .. controls (541,320.53) and (540.56,320.97) .. (540.02,320.97) .. controls (539.47,320.97) and (539.03,320.53) .. (539.03,319.98) -- cycle ;
                \draw    (330.02,170) -- (271.43,228.57) ;
                \draw [shift={(270.02,229.98)}, rotate = 315.01] [color={rgb, 255:red, 0; green, 0; blue, 0 }  ][line width=0.75]    (10.93,-3.29) .. controls (6.95,-1.4) and (3.31,-0.3) .. (0,0) .. controls (3.31,0.3) and (6.95,1.4) .. (10.93,3.29)   ;
                \draw    (330.02,170.98) -- (291.14,228.33) ;
                \draw [shift={(290.02,229.98)}, rotate = 304.14] [color={rgb, 255:red, 0; green, 0; blue, 0 }  ][line width=0.75]    (10.93,-3.29) .. controls (6.95,-1.4) and (3.31,-0.3) .. (0,0) .. controls (3.31,0.3) and (6.95,1.4) .. (10.93,3.29)   ;
                \draw    (330.02,170.98) -- (388.59,228.58) ;
                \draw [shift={(390.02,229.98)}, rotate = 224.52] [color={rgb, 255:red, 0; green, 0; blue, 0 }  ][line width=0.75]    (10.93,-3.29) .. controls (6.95,-1.4) and (3.31,-0.3) .. (0,0) .. controls (3.31,0.3) and (6.95,1.4) .. (10.93,3.29)   ;
                \draw    (145.02,360.98) -- (328.13,420.37) ;
                \draw [shift={(330.03,420.98)}, rotate = 197.97] [color={rgb, 255:red, 0; green, 0; blue, 0 }  ][line width=0.75]    (10.93,-3.29) .. controls (6.95,-1.4) and (3.31,-0.3) .. (0,0) .. controls (3.31,0.3) and (6.95,1.4) .. (10.93,3.29)   ;
                \draw    (275.02,359.98) -- (328.66,418.51) ;
                \draw [shift={(330.02,419.98)}, rotate = 227.49] [color={rgb, 255:red, 0; green, 0; blue, 0 }  ][line width=0.75]    (10.93,-3.29) .. controls (6.95,-1.4) and (3.31,-0.3) .. (0,0) .. controls (3.31,0.3) and (6.95,1.4) .. (10.93,3.29)   ;
                \draw    (515.02,359.98) -- (331.92,419.37) ;
                \draw [shift={(330.02,419.98)}, rotate = 342.03] [color={rgb, 255:red, 0; green, 0; blue, 0 }  ][line width=0.75]    (10.93,-3.29) .. controls (6.95,-1.4) and (3.31,-0.3) .. (0,0) .. controls (3.31,0.3) and (6.95,1.4) .. (10.93,3.29)   ;
                \draw    (120.02,319.98) -- (143.97,359.28) ;
                \draw [shift={(145.02,360.98)}, rotate = 238.63] [color={rgb, 255:red, 0; green, 0; blue, 0 }  ][line width=0.75]    (10.93,-3.29) .. controls (6.95,-1.4) and (3.31,-0.3) .. (0,0) .. controls (3.31,0.3) and (6.95,1.4) .. (10.93,3.29)   ;
                \draw    (171.02,319.98) -- (146.09,359.3) ;
                \draw [shift={(145.02,360.98)}, rotate = 302.38] [color={rgb, 255:red, 0; green, 0; blue, 0 }  ][line width=0.75]    (10.93,-3.29) .. controls (6.95,-1.4) and (3.31,-0.3) .. (0,0) .. controls (3.31,0.3) and (6.95,1.4) .. (10.93,3.29)   ;
                \draw    (250.02,318.98) -- (273.97,358.28) ;
                \draw [shift={(275.02,359.98)}, rotate = 238.63] [color={rgb, 255:red, 0; green, 0; blue, 0 }  ][line width=0.75]    (10.93,-3.29) .. controls (6.95,-1.4) and (3.31,-0.3) .. (0,0) .. controls (3.31,0.3) and (6.95,1.4) .. (10.93,3.29)   ;
                \draw    (301.02,318.98) -- (276.09,358.3) ;
                \draw [shift={(275.02,359.98)}, rotate = 302.38] [color={rgb, 255:red, 0; green, 0; blue, 0 }  ][line width=0.75]    (10.93,-3.29) .. controls (6.95,-1.4) and (3.31,-0.3) .. (0,0) .. controls (3.31,0.3) and (6.95,1.4) .. (10.93,3.29)   ;
                \draw    (490.02,318.98) -- (513.97,358.28) ;
                \draw [shift={(515.02,359.98)}, rotate = 238.63] [color={rgb, 255:red, 0; green, 0; blue, 0 }  ][line width=0.75]    (10.93,-3.29) .. controls (6.95,-1.4) and (3.31,-0.3) .. (0,0) .. controls (3.31,0.3) and (6.95,1.4) .. (10.93,3.29)   ;
                \draw    (541.02,318.98) -- (516.09,358.3) ;
                \draw [shift={(515.02,359.98)}, rotate = 302.38] [color={rgb, 255:red, 0; green, 0; blue, 0 }  ][line width=0.75]    (10.93,-3.29) .. controls (6.95,-1.4) and (3.31,-0.3) .. (0,0) .. controls (3.31,0.3) and (6.95,1.4) .. (10.93,3.29)   ;
                \draw  [dash pattern={on 4.5pt off 4.5pt}]  (270.02,228.98) -- (121.72,319.95) ;
                \draw [shift={(120.02,321)}, rotate = 328.47] [color={rgb, 255:red, 0; green, 0; blue, 0 }  ][line width=0.75]    (10.93,-3.29) .. controls (6.95,-1.4) and (3.31,-0.3) .. (0,0) .. controls (3.31,0.3) and (6.95,1.4) .. (10.93,3.29)   ;
                \draw  [dash pattern={on 4.5pt off 4.5pt}]  (290.02,229.98) -- (299.79,317.98) ;
                \draw [shift={(300.02,319.97)}, rotate = 263.66] [color={rgb, 255:red, 0; green, 0; blue, 0 }  ][line width=0.75]    (10.93,-3.29) .. controls (6.95,-1.4) and (3.31,-0.3) .. (0,0) .. controls (3.31,0.3) and (6.95,1.4) .. (10.93,3.29)   ;
                \draw  [dash pattern={on 4.5pt off 4.5pt}]  (390.02,229.98) -- (488.54,319.64) ;
                \draw [shift={(490.02,320.98)}, rotate = 222.3] [color={rgb, 255:red, 0; green, 0; blue, 0 }  ][line width=0.75]    (10.93,-3.29) .. controls (6.95,-1.4) and (3.31,-0.3) .. (0,0) .. controls (3.31,0.3) and (6.95,1.4) .. (10.93,3.29)   ;
                
                \draw    (210,306.8) -- (210,334.8) ;
                \draw    (340,306.8) -- (340,334.8) ;
                \draw    (450,306.8) -- (450,334.8) ;

                \draw  [dash pattern={on 4.5pt off 4.5pt}]  (270.02,229.98) -- (171.52,318.65) ;
                \draw [shift={(170.03,319.98)}, rotate = 318.01] [color={rgb, 255:red, 0; green, 0; blue, 0 }  ][line width=0.75]    (10.93,-3.29) .. controls (6.95,-1.4) and (3.31,-0.3) .. (0,0) .. controls (3.31,0.3) and (6.95,1.4) .. (10.93,3.29)   ;

                \draw  [dash pattern={on 4.5pt off 4.5pt}]  (290.02,229) -- (250.82,318.15) ;
                \draw [shift={(250.02,319.98)}, rotate = 293.73] [color={rgb, 255:red, 0; green, 0; blue, 0 }  ][line width=0.75]    (10.93,-3.29) .. controls (6.95,-1.4) and (3.31,-0.3) .. (0,0) .. controls (3.31,0.3) and (6.95,1.4) .. (10.93,3.29)   ;
                
                \draw (328,229.2) node [anchor=north west][inner sep=0.75pt]    {$\cdots $};
                \draw (138,319.2) node [anchor=north west][inner sep=0.75pt]    {$\cdots $};
                \draw (331.02,156.38) node [anchor=north west][inner sep=0.75pt]  [font=\footnotesize]  {$v_{\text{source}}$};
                \draw (332.02,423.38) node [anchor=north west][inner sep=0.75pt]  [font=\footnotesize]  {$v_{\text{sink}}$};
                \draw (256.02,214.38) node [anchor=north west][inner sep=0.75pt]  [font=\scriptsize]  {$v_{1}^{\prime }$};
                \draw (299.02,214.4) node [anchor=north west][inner sep=0.75pt]  [font=\scriptsize]  {$v_{2}^{\prime }$};
                \draw (391.02,213.4) node [anchor=north west][inner sep=0.75pt]  [font=\scriptsize]  {$v_{s}^{\prime }$};
                \draw (102.02,322.38) node [anchor=north west][inner sep=0.75pt]  [font=\scriptsize]  {$v_{10}$};
                \draw (173.02,322.4) node [anchor=north west][inner sep=0.75pt]  [font=\scriptsize]  {$v_{1\gamma ^{\prime }}$};
                \draw (268,319.2) node [anchor=north west][inner sep=0.75pt]    {$\cdots $};
                \draw (232.02,322.38) node [anchor=north west][inner sep=0.75pt]  [font=\scriptsize]  {$v_{20}$};
                \draw (302.02,322.4) node [anchor=north west][inner sep=0.75pt]  [font=\scriptsize]  {$v_{2\gamma ^{\prime }}$};
                \draw (508,319.2) node [anchor=north west][inner sep=0.75pt]    {$\cdots $};
                \draw (462.02,322.38) node [anchor=north west][inner sep=0.75pt]  [font=\scriptsize]  {$v_{s 0}$};
                \draw (542.02,322.4) node [anchor=north west][inner sep=0.75pt]  [font=\scriptsize]  {$v_{s \gamma ^{\prime }}$};
                \draw (131.02,362.4) node [anchor=north west][inner sep=0.75pt]  [font=\scriptsize]  {$v_{1}$};
                \draw (261.02,362.38) node [anchor=north west][inner sep=0.75pt]  [font=\scriptsize]  {$v_{2}$};
                \draw (517.02,362.4) node [anchor=north west][inner sep=0.75pt]  [font=\scriptsize]  {$v_{s}$};
                \draw (388,319.2) node [anchor=north west][inner sep=0.75pt]    {$\cdots $};
                \draw (388,359.2) node [anchor=north west][inner sep=0.75pt]    {$\cdots $};
            \end{tikzpicture}
            }
        \caption{An overview of the flow network for extendable checking algorithm.} \label{fig:two-clique-mathcing}
    \end{figure}
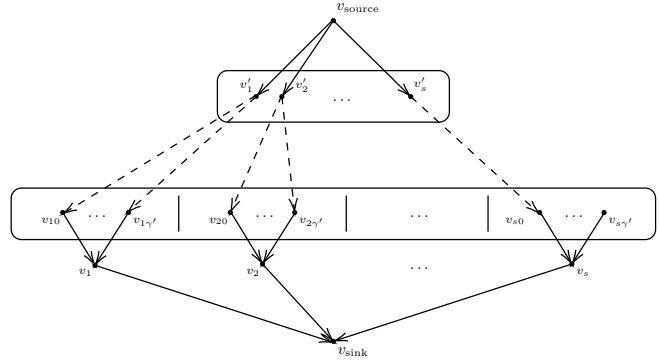

    For each vertex $u_i$ of $S \setminus Q$, consider $\gamma^\prime + 1$ vertices $v_{i 0}, \ldots, v_{i \gamma^\prime}$ that each represents the location of the paired or singleton color class containing $u_i$ between the gaps of color classes $(C^\prime_1, \ldots, C^\prime_{\gamma^\prime})$. The vertices $v_{i \lambda}$ for $1 \leq i \leq s$ and $0 \leq \lambda \leq \gamma^\prime$, form a part of our bipartite subgraph. After considering these main vertices, we create an extra auxiliary vertex $v_i$ for each of the vertices $u_i$ of $S \setminus Q$ to handle our constraints. For each vertex $u_j^\prime$ of $S^\prime \setminus Q$, consider a vertex $v_j^\prime$ that represents $u_j^\prime$ in our bipartite subgraph. In addition, let us add $s - s^\prime$ dummy vertices $v_{s^\prime + 1}^\prime, \ldots, v_s^\prime$ that represent null nodes that will be used to determine singleton color classes. These dummy vertices, along with the vertices $v_j^\prime$ for $1 \leq j \leq s^\prime$, form the second part of our bipartite subgraph. Now it is time to complete the flow network. Figure~\ref{fig:two-clique-mathcing} shows an example of such a flow network. 

    First, we add $(v_{\text{source}}, v_j^\prime)$ edges for all $1 \leq j \leq s$ to complete the first half of our network. Then, we add $(v_{i \lambda}, v_i)$ edges for all $1 \leq i \leq s$ and $0 \leq \lambda \leq \gamma^\prime$. We also add $(v_i, v_{\text{sink}})$ edges for all $1 \leq i \leq s$ to complete the second half of the network. Now, it remains to add the edges between these two halves. For each $1 \leq i, j \leq s$ and $0 \leq \lambda \leq \gamma^\prime$, we add the $(v_j^\prime, v_{i \lambda})$ edge to the network, if and only if creating a paired color class with $u_j^\prime$ and $u_i$ and placing it immediately after the $\lambda^{\text{th}}$ color class of $(C^\prime_1, \ldots, C^\prime_{\gamma^\prime})$ does not violate any constraints. That is, for each $p\in \{1,\ldots,\lambda\}$, $u_j^\prime$ and $u_i$ must have an edge to at least one vertex in $C^\prime_p$. Also, for each $q\in \{\lambda+1,\ldots,\gamma^\prime\}$, and $y\in C^\prime_{q}$, $y$ is adjacent to a vertex in  $\{u_j^\prime,u_i\}$. When $v^\prime_j$ represents a null node, we only add the edge $(v^\prime_j, v_{i\gamma^\prime})$, if and only if the singleton color class consisting of $u_i$ can be placed immediately after the last color class of $(C^\prime_1, \ldots, C^\prime_{\gamma^\prime})$. This holds when $u_i$ has at least one neighbor in each of the color classes in $(C^\prime_1, \ldots, C^\prime_{\gamma^\prime})$.
    
    Finally, the capacity of all edges is set to $1$, and we are ready to use this flow network in our algorithm. Since the capacities are integers, the maximum flow in this network will also be integral. By running a \textsc{Maximum Flow} solver on this network, we obtain a maximum matching in the bipartite subgraph formed by the $s + 1$ main sections, with the condition that for each $1 \leq i \leq s$, only one of the vertices $v_{i 0}, \ldots, v_{i \gamma^\prime}$ can be a part of the matching. This matching represents the paired and singleton color classes placed in their respective gaps. Note that singletons can appear in the final gap, according to the constraints on adding edges in the flow. The order of the pairs inside each gap can be arbitrary, as each pair includes a vertex from each clique. However, in the final gap, where singletons may appear, we arrange the pairs in an arbitrary order before placing the singletons (the order of the singletons can also be arbitrary). As outlined in Lemma~\ref{lemma:singletons-at-the-end-G}, in any arbitrary Grundy coloring, a singleton can be placed at the end without affecting any properties of the Grundy coloring. This is why we only consider singletons to appear at the end.
    The size of this matching determines the extendibility of $(C^\prime_1, \ldots, C^\prime_{\gamma^\prime})$: if it is equal to $s$, the answer is positive; otherwise, it is negative. The total running time of this algorithm is
    $$
        \underbrace{s (\gamma^\prime + 2) + 2}_{\text{Vertices}} + \underbrace{s (\gamma^\prime + 2)}_{\text{Fixed edges}} + \underbrace{s^2 \gamma^\prime}_{\text{Other edges}} \times \underbrace{2 (n - r - s - s^\prime)}_{\substack{\text{Needed operations} \\ \text{for checking an edge}}} + \underbrace{O\Big( \big( s (\gamma^\prime + 2) + 2 \big)^3 \Big)}_{\substack{\text{Solving for} \\ \text{\textsc{Maximum Flow} \cite{10.1145/48014.61051}}}}.
    $$
    This running time can be upper-bounded, yielding a total time complexity of $ O(n^2) + O(n^2) + O(n^3) \times O(n) + O\big((n^2)^3\big) = O(n^6)$. Thus, we provided a $O(n^6)$ algorithm to check the extendability of the Grundy coloring $(C^\prime_1, \ldots, C^\prime_{\gamma^\prime})$ of $G[Q \cup R]$, concluding the proof. \qed
\end{proof}

\begin{theorem} \label{theorem:2-cluster-modulator-fpt}
    There is an algorithm that given a graph $G$ and a 2-cluster modulator $R$ of size $r$, runs in time $O(2^{O(r^2)}n^6)$ and outputs a Grundy coloring of $G$ using the maximum number of colors.
\end{theorem}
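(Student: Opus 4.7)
The plan is to combine the structural lemmas of this section with the flow-based extendability test of Theorem~\ref{theorem:poly-extendable-2} in a guess-and-verify scheme. First I would compute the equivalence classes of $\sim_R$ on $S\cup S^\prime$ in time $O((n+m)r)$ and construct the set $F$ with $|F|\le r\,2^{r+1}$, exactly as described before Lemma~\ref{lemma:R_colorclasses-2}. By Lemma~\ref{lemma:R_colorclasses-2}, there exists some $Q\subseteq F$ with $|Q|\le 2r$ and an extendable Grundy coloring $(C^\prime_1,\dots,C^\prime_{\gamma^\prime})$ of $G[Q\cup R]$ that is a subsequence of an optimal Grundy coloring of $G$, so it suffices to locate such a pair $(Q,(C^\prime_1,\dots,C^\prime_{\gamma^\prime}))$ that maximizes $\gamma^\prime+\beta$, where $\beta=\max\{|S\setminus Q|,|S^\prime\setminus Q|\}$.

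The enumeration step would iterate over all subsets $Q\subseteq F$ of size at most $2r$; there are at most $\binom{r2^{r+1}}{2r}=2^{O(r^2)}$ such sets. For each $Q$, since $|Q\cup R|\le 3r$, I would enumerate all ordered partitions of $Q\cup R$ into non-empty color classes and keep only those that are proper first-fit colorings of $G[Q\cup R]$, which by Lemma~\ref{lemma:perm-to-colorclass} can equivalently be represented by all vertex orderings; the number of such candidates is bounded by $(3r)!=2^{O(r\log r)}$. For each surviving candidate coloring I would invoke Theorem~\ref{theorem:poly-extendable-2} to decide extendability in time $O(n^6)$, and in the positive case record the objective value $\gamma^\prime+\beta$. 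Among all extendable candidates encountered, I keep the one achieving the maximum of $\gamma^\prime+\beta$; by Lemma~\ref{lemma:grundy-calc-2} that maximum equals the Grundy number of $G$.

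Correctness follows because Lemma~\ref{lemma:R_colorclasses-2} guarantees the optimum is witnessed by some $(Q,(C^\prime_1,\dots,C^\prime_{\gamma^\prime}))$ that the enumeration explores, and Theorem~\ref{theorem:poly-extendable-2} certifies extendability exactly. To actually \emph{output} a Grundy coloring with that many colors (and not only its value), when the flow network of Theorem~\ref{theorem:poly-extendable-2} has a matching of size $s$, I would read off, for each $v_j^\prime$ matched to some $v_{i\lambda}$, the corresponding paired or singleton color class and insert it immediately after $C^\prime_\lambda$, appending remaining singleton classes at the end as justified by Lemma~\ref{lemma:singletons-at-the-end-G}. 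The total running time is $2^{O(r^2)}\cdot 2^{O(r\log r)}\cdot O(n^6)=O(2^{O(r^2)}n^6)$, absorbing the $O(n+m)$ preprocessing.

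The main obstacle I anticipate is bookkeeping rather than a deep combinatorial issue: one must be careful that the enumerated coloring of $G[Q\cup R]$ is itself a valid Grundy coloring (every vertex in class $C^\prime_j$ has a neighbor in each $C^\prime_{j^\prime}$ for $j^\prime<j$, inside $G[Q\cup R]$), and that the final coloring returned is a proper first-fit coloring of the \emph{entire} graph $G$. Both conditions follow from Definition~\ref{def:ext} together with the edge-feasibility conditions built into the flow network, but they should be verified explicitly when stitching the pieces together.
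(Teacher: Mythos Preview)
Your proposal is correct and follows essentially the same approach as the paper: enumerate $Q\subseteq F$ and Grundy colorings of $G[Q\cup R]$, invoke Theorem~\ref{theorem:poly-extendable-2} to test extendability, and read off the coloring and the value $\gamma'+\beta$ via Lemma~\ref{lemma:grundy-calc-2}. Your write-up is in fact slightly more explicit than the paper's own proof in that you spell out the enumeration bounds and how to reconstruct the output coloring from the flow.
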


\begin{proof}
    Let $F \subseteq V(G)$ be the subset from Lemma~\ref{lemma:R_colorclasses-2}, with $|F| \leq r 2^{r+1}$. Let $G^\prime=G[R\cup F]$. By Lemma~\ref{lemma:R_colorclasses-2}, there exists a subset $Q \subseteq V(G) \setminus R$ of size at most $2r$ such that $G[Q\cup R]$ has the following property: an optimal Grundy coloring $(C_1, \ldots, C_{\gamma})$ of $G$ contains a subsequence $(C^\prime_1, \ldots, C^\prime_{\gamma^\prime})$.
    
    Therefore, we can guess the correct $Q$ and a Grundy coloring $(C^\prime_1, \ldots, C^\prime_{\gamma^\prime})$ of $G[Q\cup R]$, extendable in $2^{O(r^2)}$ time. Testing if $(C^\prime_1, \ldots, C^\prime_{\gamma^\prime})$ is extendable can be done in $O(n^6)$, as per Theorem~\ref{theorem:poly-extendable-2}. After confirming extendability, Lemma~\ref{lemma:grundy-calc-2} allows us to calculate the Grundy number and Grundy coloring of $G$. Using a similar approach explained in Section~\ref{sec:clique}, constructing $F$ takes $O(n+m)$ time. Thus, the total running time would be of $O(2^{O(r^2)}n^6 )$, as desired. \qed
\end{proof}

\section{Solving the $k$-Cluster Modulator Case} \label{sec:kcluster}

Finally, we consider the more general scenario involving a $k$-cluster modulator. In this setting, after removing the $k$-cluster modulator $R$ from $G$, the remaining graph $G - R$ forms a $k$-cluster graph.

Let $K_1, \ldots, K_k$ denote the connected components of $G - R$, each of which is a complete graph. Define $S_1 = V(K_1), \ldots, S_k = V(K_k)$ to be the cliques in $G$. Throughout this section, we denote $r=|R|$.  To address this case, we encode the problem as an Integer Linear Program (ILP) with a number of variables bounded by a function of $r+k$. First, we define some notations and prove some auxiliary lemmas.

\begin{definition}
Let $(C_1, \ldots, C_\gamma)$ be a Grundy coloring of $G$. Define $\text{CL}(C_i)$ to be the set of cliques with a vertex in the color class $C_i$.    
\end{definition}

\begin{lemma} \label{lemma:cliques-fill-in-order}
    Let $(C_1, \ldots, C_\gamma)$ be a Grundy coloring of $G$ and let $(C_{\alpha_1}, \ldots, C_{\alpha_t})$ be the color classes that do not contain a vertex from the modulator $R$, with $\alpha_1 \leq \cdots \leq \alpha_t$. Then we have $\text{CL}(C_{\alpha_t}) \subseteq \cdots \subseteq \text{CL}(C_{\alpha_1})$.
\end{lemma}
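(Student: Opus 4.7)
The plan is to reduce the chain inclusion to a single-step statement and then exploit the fact that the color classes $C_{\alpha_j}$ live entirely inside the cliques $S_1 \cup \cdots \cup S_k$. It suffices to prove that $\text{CL}(C_{\alpha_{j+1}}) \subseteq \text{CL}(C_{\alpha_j})$ for every $j \in \{1,\ldots,t-1\}$; the full chain of inclusions then follows by transitivity.

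First I would record two basic observations. Since $C_{\alpha_j} \cap R = \emptyset$, every vertex of $C_{\alpha_j}$ belongs to some clique $S_i$; and because $C_{\alpha_j}$ is an independent set while each $S_i$ induces a complete graph, $C_{\alpha_j}$ picks up \emph{at most one} vertex from each clique. In particular, $\text{CL}(C_{\alpha_j})$ is exactly the set of cliques $S_i$ that contribute a (unique) vertex to $C_{\alpha_j}$.

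Now fix some clique $S_i \in \text{CL}(C_{\alpha_{j+1}})$ and let $v \in S_i \cap C_{\alpha_{j+1}}$. Since $(C_1,\ldots,C_\gamma)$ is a Grundy coloring and $\alpha_j < \alpha_{j+1}$, the vertex $v$ must have a neighbor $u$ in $C_{\alpha_j}$. Because $C_{\alpha_j}$ avoids the modulator $R$, the vertex $u$ must belong to some clique $S_{i'}$ of $G-R$. However, in $G-R$ there are no edges between distinct cliques, so any neighbor of $v \in S_i$ that lies in $V(G)\setminus R$ must itself lie in $S_i$. Hence $i' = i$, which gives $u \in S_i \cap C_{\alpha_j}$, and therefore $S_i \in \text{CL}(C_{\alpha_j})$, as required.

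There is no real obstacle here: the argument hinges on the single observation that a neighbor of $v \in S_i$ which avoids $R$ is forced into $S_i$ itself. The Grundy property then delivers such a neighbor in each earlier non-$R$ color class $C_{\alpha_j}$, and iterating this one-step inclusion produces the desired nested chain $\text{CL}(C_{\alpha_t}) \subseteq \cdots \subseteq \text{CL}(C_{\alpha_1})$.
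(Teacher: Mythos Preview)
Your proof is correct and follows essentially the same approach as the paper's: both hinge on the observation that a vertex $v \in S_i$ has no neighbors in $V(G)\setminus R$ outside $S_i$, so the Grundy adjacency requirement forces $S_i$ to appear in every earlier non-$R$ color class. The only cosmetic difference is that the paper argues by contradiction for an arbitrary pair $\alpha_j < \alpha_i$, whereas you prove the consecutive inclusions directly and invoke transitivity.
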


\begin{proof}
    Suppose, for contradiction, that there exist two color classes $C_{\alpha_i}$ and $C_{\alpha_j}$ with $i > j$ such that $ \text{CL}(C_{\alpha_i}) \setminus \text{CL}(C_{\alpha_j}) \neq \emptyset$. Let $S_x$ be a clique in $\text{CL}(C_{\alpha_i}) \setminus \text{CL}(C_{\alpha_j})$. By definition, there exists a vertex $v \in S_x$ that appears in $C_{\alpha_i}$. Since $S_x \notin \text{CL}(C_{\alpha_j})$, this vertex $v$ is not adjacent to any vertex in $C_{\alpha_j}$. However, this contradicts the properties of a Grundy coloring, which requires that every vertex in a later color class (here, $C_{\alpha_i}$) must be adjacent to at least one vertex in every earlier color class (here, $C_{\alpha_j}$). This contradiction proves the containment relationship, establishing the desired result. \qed
\end{proof}

Similar to the previous sections, let us define an equivalence relation $\sim_R$ on the set $\bigcup_{i = 1}^k S_i$. For any two vertices $u, v \in \bigcup_{i = 1}^k S_i$, we say $u \sim_R v$ if and only if they have the same closed neighborhood, i.e., $N_G[u] = N_G[v]$. Notice that each equivalence class is a subset of a clique. The equivalence classes of $\sim_R$, that are subsets of the clique $S_i$, are denoted by $E_{i,1}, \ldots, E_{i,q(i)}$, where $q(i)$ is the number of equivalent classes in $S_i$. The total number of equivalence classes is $\sum_{i = 1}^k q(i) \leq k 2^{r}$. Now from each equivalence class $E_{i,j}$ arbitrarily select a subset $F_{i,j}\subseteq E_{i,j}$ of size $\min\{r,|E_{i,j}|\}$. Let $F = \bigcup_{i = 1}^k \Big( \bigcup_{j = 1}^{q(i)} F_{ij} \Big)$. Note that we have $|F|\leq rk2^{r}$.

\begin{lemma} \label{lemma:R_colorclasses-k}
    There is vertex subset $Q\subseteq F$ of size at most $kr$ and a (not necessarily optimal) Grundy coloring $(C^\prime_1, \ldots, C^\prime_{\gamma^\prime})$ of $G[Q\cup R]$ with the following property. There is an optimal Grundy coloring $(C_1, \ldots, C_{\gamma})$ of $G$ such that $(C^\prime_1, \ldots, C^\prime_{\gamma^\prime})$ is a subsequence of $(C_1, \ldots, C_{\gamma})$.
\end{lemma}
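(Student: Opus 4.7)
The plan is to mimic the proof of Lemma~\ref{lemma:R_colorclasses-2}, generalizing from two cliques to $k$ cliques. First I would start from an arbitrary optimal Grundy coloring $(\widehat{C}_1, \ldots, \widehat{C}_\gamma)$ of $G$ and isolate the color classes $\widehat{C}_{\alpha_1}, \ldots, \widehat{C}_{\alpha_{r^\prime}}$ that meet the modulator $R$. Since $|R| = r$, we have $r^\prime \leq r$, and since each color class is independent and each clique $S_i$ contributes at most one vertex per color class, each such $\widehat{C}_{\alpha_i}$ contains at most $k$ vertices from $V(G) \setminus R$. Hence the total number of non-modulator vertices appearing in these classes is at most $kr$.

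Next I would invoke Lemma~\ref{lemma:swap-vertices} to replace every non-modulator vertex $v \in \widehat{C}_{\alpha_i} \setminus R$ by a $\sim_R$-equivalent vertex lying in $F$. Feasibility is a counting argument: each equivalence class $E_{i,j}$ is a subset of a single clique $S_i$, so at most one vertex of $E_{i,j}$ appears per color class, meaning that across the at most $r$ relevant color classes we need at most $r$ representatives from $E_{i,j}$; this is exactly the supply $|F_{i,j}| = \min\{r, |E_{i,j}|\}$ provides. Each swap preserves the total number of colors and the Grundy property by Lemma~\ref{lemma:swap-vertices}, so after processing all such vertices we obtain a new optimal Grundy coloring $(C_1, \ldots, C_\gamma)$ in which every color class meeting $R$ is contained in $R \cup F$.

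Finally, I would set $Q = \bigl((C_{\alpha_1} \cup \cdots \cup C_{\alpha_{r^\prime}}) \setminus R\bigr) \subseteq F$, so that $|Q| \leq kr$, and take $(C^\prime_1, \ldots, C^\prime_{\gamma^\prime}) = (C_{\alpha_1}, \ldots, C_{\alpha_{r^\prime}})$. This is a subsequence of $(C_1, \ldots, C_\gamma)$ by construction, and I would verify that it is itself a Grundy coloring of $G[Q \cup R]$: independence of each $C^\prime_i$ is inherited from $G$, and for any $v \in C^\prime_i$ and any $j < i$, the Grundy property in $G$ yields a neighbor of $v$ inside $C_{\alpha_j} = C^\prime_j \subseteq Q \cup R$, so the witnessing edge survives in the induced subgraph $G[Q \cup R]$.

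The main obstacle I anticipate is the bookkeeping in the swap step, namely ensuring that the simultaneous replacements across up to $r$ distinct color classes never reuse the same $F$-vertex twice and never accidentally merge two color classes; the key observation that resolves it is that each equivalence class lies entirely within one clique and thus contributes at most one vertex to any given color class, which together with $|F_{i,j}| \geq \min\{r, |E_{i,j}|\}$ leaves enough fresh representatives for every required swap. Beyond that, the argument is a direct lift of the two-clique proof, with the bound on $|Q|$ scaling from $2r$ to $kr$ because each of the $r^\prime \leq r$ color classes touching $R$ can now contain up to $k$ (rather than $2$) non-modulator vertices.
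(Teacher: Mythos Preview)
Your proposal is correct and follows exactly the approach the paper intends: the paper's own proof simply states that it is ``almost identical to the proof of Lemma~\ref{lemma:R_colorclasses-2},'' and your argument is precisely that generalization, with the bound $2r$ replaced by $kr$ and the per-clique counting extended to $k$ cliques. In fact you supply more detail than the paper does, including the explicit verification that $(C_{\alpha_1},\ldots,C_{\alpha_{r'}})$ is a Grundy coloring of $G[Q\cup R]$ and the bookkeeping that $|F_{i,j}|=\min\{r,|E_{i,j}|\}$ suffices for the swaps.
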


\begin{proof}
    The proof is almost identical to the proof of Lemma~\ref{lemma:R_colorclasses-2}. \qed
\end{proof}

Recall the definition of extendable Grundy coloring of $G[R\cup Q]$ (Definition~\ref{def:ext}). 

\begin{lemma} \label{lemma:grundy-calc-k}
    Let $(C^\prime_1, \ldots, C^\prime_{\gamma^\prime})$ be a (not necessarily optimal) Grundy coloring of $G[Q\cup R]$, that is extendable. Let $\beta= \max_{i = 1}^k\{|S_i\setminus Q|\}$. Then, Grundy number of $G$ is $\gamma^\prime+\beta$.
\end{lemma}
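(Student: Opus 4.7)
The plan is to follow the template of the proof of Lemma~\ref{lemma:grundy-calc-2}, substituting the nested-cliques structure given by Lemma~\ref{lemma:cliques-fill-in-order} for the singleton-specific Lemma~\ref{lemma:singletons-one-clique}. By the extendability assumption, I fix an optimal Grundy coloring $(C_1, \ldots, C_\gamma)$ of $G$ that contains $(C^\prime_1, \ldots, C^\prime_{\gamma^\prime})$ as a subsequence. Because $(C^\prime_i)$ is a partition of $Q \cup R$, the color classes of $(C_j)$ that are not among the $(C^\prime_i)$'s, which I will call the \emph{extra} classes, jointly partition $V(G) \setminus (Q \cup R) = \bigcup_{i=1}^{k} (S_i \setminus Q)$; in particular, none of them contains a vertex of $R$. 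Writing $p = \gamma - \gamma^\prime$ for the number of extras, the lemma reduces to proving $p = \beta$.

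For the upper bound $p \leq \beta$, I apply Lemma~\ref{lemma:cliques-fill-in-order} to $(C_1, \ldots, C_\gamma)$. Listing the extras in order as $C_{j_1}, \ldots, C_{j_p}$ with $j_1 < \cdots < j_p$, their $\text{CL}$-sets form a decreasing chain $\text{CL}(C_{j_p}) \subseteq \cdots \subseteq \text{CL}(C_{j_1})$ (since the extras are a subsequence of the $R$-free classes whose $\text{CL}$-sets are already nested). I pick any clique $S_{i^{\ast}} \in \text{CL}(C_{j_p})$; by the chain property, $S_{i^{\ast}}$ is represented in every extra. Since each color class is independent in $G$ and $S_{i^{\ast}}$ is a clique, every extra contributes at most one vertex of $S_{i^{\ast}} \setminus Q$, so $p \leq |S_{i^{\ast}} \setminus Q| \leq \beta$.

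For the matching lower bound $p \geq \beta$, let $i'$ be an index with $|S_{i'} \setminus Q| = \beta$. Every vertex of $S_{i'} \setminus Q$ must appear in some extra, since the $(C^\prime_i)$'s cover only vertices of $Q \cup R$; moreover, each extra is independent in $G$ and $S_{i'}$ is a clique, so each extra holds at most one vertex of $S_{i'}$. Hence $p \geq \beta$. Combining the two bounds gives $p = \beta$, and therefore the Grundy number of $G$ equals $\gamma = \gamma^\prime + \beta$, as required. I do not anticipate a substantive obstacle: both bounds are one-line consequences of independence and the chain lemma. The only point demanding care is confirming that the extras are indeed $R$-free color classes, so that Lemma~\ref{lemma:cliques-fill-in-order} can be invoked on them; this is automatic because $(C^\prime_i)$ already contains every vertex of $R$.
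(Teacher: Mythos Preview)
Your proof is correct and follows essentially the same approach the paper indicates: the paper's own proof merely says to rerun the argument of Lemma~\ref{lemma:grundy-calc-2} with Lemma~\ref{lemma:cliques-fill-in-order} in place of Lemma~\ref{lemma:singletons-one-clique}, and that is exactly what you do, spelling out the two inequalities $p \le \beta$ (via the nested $\text{CL}$-chain and a clique present in the last extra) and $p \ge \beta$ (via the pigeonhole on the largest $S_{i'}\setminus Q$). The one implicit step worth making explicit in the upper bound is that each extra class contains \emph{exactly} one vertex of $S_{i^\ast}\setminus Q$ (at least one because $S_{i^\ast}\in\text{CL}$ of every extra, at most one by independence, and it lies in $S_{i^\ast}\setminus Q$ since extras avoid $Q$), which is what yields $p\le |S_{i^\ast}\setminus Q|$; your phrasing only states ``at most one,'' though the surrounding sentence makes the intent clear.
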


\begin{proof}
    By applying Lemma~\ref{lemma:cliques-fill-in-order} instead of Lemma~\ref{lemma:singletons-one-clique} within the detailed steps of the proof of Lemma~\ref{lemma:grundy-calc-2}, it is clear that the proof is derived similarly. \qed
\end{proof}

Before stating the main theorems of this section, we need to mention a useful theorem for further use.

\begin{theorem}[\cite{cygan2015parameterized}] \label{ilp-fpt}
    An \textsc{Integer Linear Programming Feasibility} instance of size $L$ with $p$ variables can be solved using $O(p^{2.5p+o(p)}L)$ arithmetic operations and space polynomial in $L$.
\end{theorem}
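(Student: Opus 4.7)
The plan is to outline the classical algorithmic chain of Lenstra, Kannan, and Frank--Tardos that establishes fixed-parameter tractability of \textsc{Integer Linear Programming Feasibility} when parameterized by the number of variables. The statement is not original to this work; it is quoted from the textbook of Cygan et al., so any ``proof'' amounts to recalling the provenance of the bound and sketching the underlying algorithm rather than producing a new argument.

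First I would apply the Frank--Tardos preprocessing: given an ILP instance of size $L$ with $p$ variables, in time polynomial in $L$ one can replace every coefficient vector by an integer vector of bit-length bounded by a polynomial in $p$, while preserving feasibility of the instance. This ensures that all subsequent arithmetic is carried out on rationals of size $\mathrm{poly}(p)$, so the input size $L$ enters the final runtime only as the linear factor accounting for reading and preprocessing the input.

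Next I would invoke Lenstra's recursive procedure in the form improved by Kannan via HKZ-type lattice basis reduction. The geometric core is Khinchin's flatness theorem: for any convex body $K \subseteq \mathbb{R}^p$, either $K$ contains a lattice point, or the lattice width of $K$ along some integer direction is bounded by a function of $p$ alone, in which case $K \cap \mathbb{Z}^p$ is contained in a bounded number of parallel hyperplanes. Kannan's algorithm computes such a flat direction using lattice basis reduction, then recurses on each of the resulting lower-dimensional slices. The recursion has depth $p$ and branching factor $p^{O(p)}$, and a refined accounting of the exponents appearing in Kannan's analysis yields precisely the bound $O(p^{2.5p + o(p)} L)$ on arithmetic operations stated in the theorem, while the space used by the lattice reduction and the recursive bookkeeping remains polynomial in $L$.

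The main obstacle in a fully rigorous write-up would be the delicate tracking of coefficient growth through the recursion and the tight exponent counting needed to extract $2.5p + o(p)$ rather than a cruder $p^{O(p)}$ bound, together with justifying that Frank--Tardos preprocessing is compatible with the recursive slicing (i.e.\ that the reduced coefficients in each subproblem remain of size $\mathrm{poly}(p)$). These technicalities are precisely what motivate citing Theorem~\ref{ilp-fpt} as a black box rather than reproducing its proof here.
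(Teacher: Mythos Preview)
Your proposal is appropriate: the paper does not prove Theorem~\ref{ilp-fpt} at all, but simply imports it from \cite{cygan2015parameterized} as a black-box tool, so there is no ``paper's own proof'' to compare against. Your sketch of the Lenstra--Kannan--Frank--Tardos chain is a correct account of where the bound comes from and actually goes beyond what the paper provides; you also correctly recognise that the statement is a citation rather than a contribution of this work.
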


\begin{theorem} \label{theorem:poly-extendable-k}
    There is an algorithm that given a graph $G$ and a Grundy coloring $(C^\prime_1, \ldots, C^\prime_{\gamma^\prime})$ of $G[Q\cup R]$, decides if $(C^\prime_1, \ldots, C^\prime_{\gamma^\prime})$ is extendable. This algorithm runs in time $O(p^{2.5p+o(p)})$, where $p = O(2^{kr} r)$.
\end{theorem}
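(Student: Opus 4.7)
The plan is to encode extendability as an Integer Linear Programming feasibility instance with $p = O(2^{kr}r)$ variables and invoke Theorem~\ref{ilp-fpt}. The formulation mirrors the neighbourhood-diversity template of \cite{belmonte2022grundy}, specialised to the ``type-and-gap'' structure that is natural for a $k$-cluster modulator.

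For each gap $\lambda \in \{0,\ldots,\gamma'\}$ (the $\gamma'+1$ positions between and around the fixed classes $C'_1,\ldots,C'_{\gamma'}$) and each \emph{type} $T$---where a type specifies, for every clique $S_i$, either the symbol $\bot$ or one equivalence class $E_{i,j}$ from which $T$ draws a single representative---I would introduce an integer variable $x_{\lambda,T}\geq 0$ counting how many color classes of type $T$ are inserted immediately after $C'_\lambda$. By Lemma~\ref{lemma:swap-vertices}, the precise representative inside an equivalence class is irrelevant, so a type fully describes a class. Since each $q(i)\leq 2^{r}$ the number of types is at most $\prod_{i}(q(i)+1)\leq 2^{O(kr)}$, and since $|Q\cup R|\leq (k+1)r$ there are $O(kr)$ gaps, giving a total of $O(2^{kr}r)$ variables.

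Three families of constraints would be imposed. (a) \emph{Placement validity}: for each pair $(\lambda,T)$, precompute from the adjacencies between $R$, the fixed classes $C'_p$, and the equivalence classes whether inserting a class of type $T$ immediately after $C'_\lambda$ preserves the first-fit property (the vertices realising $T$ see every $C'_p$ with $p\leq\lambda$, and every vertex in every $C'_p$ with $p>\lambda$ has a neighbour among them); whenever this fails, force $x_{\lambda,T}=0$. (b) \emph{Vertex conservation}: for every equivalence class $E_{i,j}$ demand $\sum_{\lambda}\sum_{T\text{ uses }E_{i,j}} x_{\lambda,T} = |E_{i,j}\setminus Q|$, so that every vertex of $V(G)\setminus(R\cup Q)$ receives a colour. (c) \emph{Nested clique ordering} from Lemma~\ref{lemma:cliques-fill-in-order}: sorting the cliques by $|S_i\setminus Q|$ exposes the unique chain $A_1\supsetneq\cdots\supsetneq A_L$ of clique-sets that the inserted classes must realise, together with prescribed multiplicities; assigning each type its level $\mathrm{lv}(T)=\ell$ when its clique-set is $A_\ell$, I would introduce binary indicators $z_{\lambda,\ell}\in\{0,1\}$ linked to the type variables by $\sum_{T:\mathrm{lv}(T)=\ell}x_{\lambda,T}\leq n\cdot z_{\lambda,\ell}$ and write the pairwise linear inequalities $z_{\lambda_1,\ell_1}+z_{\lambda_2,\ell_2}\leq 1$ for all $\lambda_1<\lambda_2$ and $\ell_1>\ell_2$, which forbids any smaller clique-set from appearing at an earlier gap than any larger clique-set.

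Correctness in both directions is then routine: a feasible ILP solution is converted into an actual extension by picking fresh representatives inside each equivalence class and ordering the within-gap insertions by level (largest clique-set first), while conversely any optimal Grundy colouring containing $(C'_1,\ldots,C'_{\gamma'})$ as a subsequence becomes a feasible assignment after normalising representatives via Lemma~\ref{lemma:swap-vertices}. Feeding the ILP into Theorem~\ref{ilp-fpt} with $p = O(2^{kr}r)$ yields the stated $O(p^{2.5p+o(p)})$ running time. The main obstacle is exactly constraint (c): a direct ``if a later gap uses a smaller clique-set than an earlier gap, fail'' rule is multiplicative in the $x_{\lambda,T}$'s, and the $z$-indicator together with the quadratically many pairwise inequalities is what keeps Lemma~\ref{lemma:cliques-fill-in-order} inside ILP while preserving the $p=O(2^{kr}r)$ variable budget.
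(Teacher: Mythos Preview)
Your proposal is correct and takes essentially the same approach as the paper: encode extendability as an ILP feasibility instance with one integer variable per (type, gap) pair, preprocess placement validity against the fixed classes $C'_p$, impose per-equivalence-class conservation, and enforce the nesting of Lemma~\ref{lemma:cliques-fill-in-order} through auxiliary binary indicators, then invoke Theorem~\ref{ilp-fpt}. The only notable difference is cosmetic: the paper uses per-clique, per-gap indicators $Y_{ij},Y'_{ij}$ with local monotonicity constraints, whereas you precompute the unique chain $A_1\supsetneq\cdots\supsetneq A_L$ from the sizes $|S_i\setminus Q|$ and use level indicators $z_{\lambda,\ell}$ with pairwise ordering inequalities---both encodings achieve the same effect and stay within the $p=O(2^{kr}r)$ variable budget (note that $\gamma'\leq r$ since every $C'_j$ meets $R$, so your gap count is actually $O(r)$, matching the paper's).
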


\begin{proof}
    The objective is to construct new color classes and add them to the existing Grundy coloring $(C^\prime_1, \ldots, C^\prime_{\gamma^\prime})$ of $G[Q \cup R]$. There are $\gamma^\prime + 1$ possible positions where a new color class can be inserted, referred to as gaps, denoted $G_0, \ldots, G_{\gamma^\prime}$, between the existing color classes $C^\prime_1, \ldots, C_{\gamma^\prime}$. Each new color class to be added consists solely of vertices from the cliques and can be represented by a tuple $(s_1, \ldots, s_k)$, where each $s_i$ indicates which equivalence class from the clique $S_i$ is chosen for the color class. This representation is sufficient because vertices within the same equivalence class can be swapped, due to Lemma~\ref{lemma:swap-vertices}.
    
    The integer linear program that models this problem would be as below:
    \begin{alignat*}{3}
        & \text{maximize} & 0& \\
        & \text{subject to} \qquad& &\sum_{j = 1}^{\gamma^\prime} \sum_{l: \exists X_{lj}} X_{lj} = \beta, \\
        && &\sum_{j = 1}^{\gamma^\prime} \sum_{l: T_l \cap E \neq \phi} X_{lj} = |E|, \qquad & \forall & \text{ equivalent class } E \\
        && &Y_{i j} \leq \hspace{15pt} \sum_{l: T_l \cap S_i \neq \phi} X_{lj}, \qquad & i &\in \{1, \ldots, k\},~j \in \{0, \ldots, \gamma^\prime\} \\
        && &Y^\prime_{i j} \geq 1 - \sum_{l: T_l \cap S_i \neq \phi} X_{lj}, \qquad & i &\in \{1, \ldots, k\},~j \in \{0, \ldots, \gamma^\prime\} \\
        && &Y_{i j} + Y^\prime_{i j} = 1, \qquad & i &\in \{1, \ldots, k\},~j \in \{0, \ldots, \gamma^\prime\} \\
        && &Y_{i (j+1)} \geq Y_{i j}, \qquad & i &\in \{1, \ldots, k\},~j \in \{1, \ldots, \gamma^\prime\} \\
        && &Y^\prime_{i j}\leq Y^\prime_{i (j+1)}, \qquad & i &\in \{1, \ldots, k\},~j \in \{1, \ldots, \gamma^\prime\} \\
        && &Y_{i j} \in \{0,1\},\qquad & i &\in \{1, \ldots, k\},~j \in \{0, \ldots, \gamma^\prime\} \\
        && &Y^\prime_{i j} \in \{0,1\},\qquad & i &\in \{1, \ldots, k\},~j \in \{0, \ldots, \gamma^\prime\} \\
        && &X_{l j} \in \mathbb{N} \cup 0,\qquad & \exists & X_{l j},~j \in \{0, \ldots, \gamma^\prime\} \\
    \end{alignat*}

    Where the variables $X_{lj}$, $Y_{ij}$ and $Y^\prime_{ij}$ are defined as
    
    \begin{align*}
        X_{lj} &= \text{the number of appearances of tuple color class } T_l \text{ in gap } G_j \text{, if feasible}, \\
        Y_{i j} &= \text{boolean variable indicating that some vertex from } S_i \text{ appears in gap } G_j, \\
        Y^\prime_{i j} &= \text{boolean variable indicating that no vertex from } S_i \text{ appears in gap } G_j.
    \end{align*}
    
    The notation $\exists X_{lj}$ implies that $X_{lj}$ is a feasible variable, i.e. it has passed a preprocessing step, verifying whether assigning the tuple color class $T_l$ to gap $G_j$ violates the Grundy coloring constraints or not. We suggest that $(C^\prime_1, \ldots, C^\prime_{\gamma^\prime})$ is extendable if and only if the above integer linear program has a solution.

    First, assume there is a valid way to construct the remaining color classes. These classes will satisfy two key properties: the Grundy coloring property and the quantity property. The Grundy coloring property is captured by constraints 3, 4, 5, 6, and 7, while the quantity property is captured by constraints 1 and 2. The remaining constraints are trivially satisfied, yielding a feasible solution to the integer linear program.

    Conversely, assume the integer linear program has a feasible solution. We show this solution provides a valid construction for the remaining color classes. By Lemma~\ref{lemma:grundy-calc-k}, constraint 1 ensures that the total number of color classes does not exceed the number of vertices in the cliques, while constraint 2 ensures each equivalence class is used the correct number of times. The preprocessing step ensures that infeasible assignments to gaps are excluded, so $\nexists X_{lj}$ for invalid assignments. Constraints 3, 4, 5, 6, and 7 enforce the Grundy coloring property by maintaining the required order between gaps, as explained in Lemma~\ref{lemma:cliques-fill-in-order}. Using the result of Lemma~\ref{lemma:swap-vertices}, any incorrect assignments within a gap can be corrected by swapping vertices. Thus, all conditions are satisfied, and the color classes can be constructed as required.
    
    Therefore, the above integer linear program operates as intended and can be used to solve the problem. Using the parameters $p = 2^{kr} (r+1) + 2 k (r+1) = O(2^{kr} r)$ and $L = \big(2^r k + 5 k (r+1) + p\big) p = O(p^2)$, and based on Theorem~\ref{ilp-fpt}, we find the total running time of the algorithm is
    $$
        \underbrace{O(2^{kr} r)}_{\text{Preprocessing Phase}} + \hspace{15pt} \underbrace{O(p^{2.5p+o(p)} p^2)}_{\text{ILP Phase}}.
    $$
    This results in a time complexity of $O(p^{2.5p+o(p)})$, completing the proof. \qed
\end{proof}

\begin{theorem} \label{theorem:k-cluser-modulator-fpt}
    There is an algorithm that given a graph $G$ and a $k$-cluster modulator $R$ of size $r$, outputs a Grundy coloring of $G$ using the maximum number of colors. This algorithm runs in time $O(2^{O(kr^2)} p^{2.5p+o(p)})$, where $p = O(2^{kr} r)$.
\end{theorem}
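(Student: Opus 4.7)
The plan is to assemble the three preceding results into a guess-and-check algorithm, mirroring the strategy used for the clique modulator (Theorem~\ref{theorem:clique-modulator-fpt}) and the $2$-cluster modulator (Theorem~\ref{theorem:2-cluster-modulator-fpt}). First, I would construct the representative set $F=\bigcup_{i=1}^{k}\bigcup_{j=1}^{q(i)} F_{i,j}$ by examining each vertex's neighborhood in $R$, which takes $O(n+m)$ time and yields $|F|\leq rk2^{r}$. By Lemma~\ref{lemma:R_colorclasses-k}, there is a subset $Q\subseteq F$ of size at most $kr$ together with an extendable Grundy coloring $(C_1^\prime,\ldots,C_{\gamma^\prime}^\prime)$ of $G[Q\cup R]$ whose extension realizes the Grundy number of $G$. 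Hence it suffices to enumerate candidate pairs $(Q,(C_1^\prime,\ldots,C_{\gamma^\prime}^\prime))$, certify extendability of each via Theorem~\ref{theorem:poly-extendable-k}, and then compute $\gamma^\prime+\beta$ via Lemma~\ref{lemma:grundy-calc-k} for the certified candidate that maximizes this sum (equivalently, the first that reaches the true Grundy number).

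Next I would bound the enumeration cost. The number of subsets of $F$ of size at most $kr$ is bounded by $|F|^{kr}\leq (rk2^{r})^{kr}$, whose binary logarithm is $kr^{2}+kr\log(rk)=O(kr^{2})$ after absorbing the $\log$ factor into the exponent. Since $|Q\cup R|=O(kr)$, by Lemma~\ref{lemma:perm-to-colorclass} we may enumerate Grundy colorings of $G[Q\cup R]$ as sorted permutations, and the number of these is at most $(O(kr))!=2^{O(kr\log(kr))}$, again absorbed into $2^{O(kr^{2})}$. For each of the resulting $2^{O(kr^{2})}$ guesses, the extendability test of Theorem~\ref{theorem:poly-extendable-k} runs in $O(p^{2.5p+o(p)})$ time with $p=O(2^{kr}r)$. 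This yields the claimed total running time of $O(2^{O(kr^{2})} p^{2.5p+o(p)})$.

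To actually output a Grundy coloring (not merely its length), once a guess is certified extendable I would read off a witnessing ILP solution: each nonzero $X_{lj}$ dictates that $X_{lj}$ copies of the tuple color class $T_l$ must be inserted immediately after $C_j^\prime$. Each such tuple prescribes, for some subset of cliques, one equivalence class from which a representative is drawn; I instantiate these by picking fresh vertices from $V(G)\setminus(Q\cup R)$ in the required equivalence classes, which is safe by Lemma~\ref{lemma:swap-vertices}. Lemma~\ref{lemma:grundy-calc-k} then guarantees that the resulting sequence is a Grundy coloring of $G$ with exactly $\gamma^\prime+\beta$ classes, matching the Grundy number. The only subtlety I anticipate is extracting an explicit feasible point rather than a yes/no answer from the ILP: this is handled either by standard self-reduction (fixing variables one at a time) or by invoking Lenstra's algorithm in its constructive form, both of which stay within the stated time bound. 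All other work is a direct composition of Lemmas~\ref{lemma:R_colorclasses-k}, \ref{lemma:grundy-calc-k} and Theorem~\ref{theorem:poly-extendable-k}, so the real difficulty of the $k$-cluster case lives in Theorem~\ref{theorem:poly-extendable-k}; here the task is primarily bookkeeping and assembly.
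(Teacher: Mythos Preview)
Your proposal is correct and follows essentially the same guess-and-check strategy as the paper's own proof: construct $F$, enumerate candidates $Q\subseteq F$ and Grundy colorings of $G[Q\cup R]$, invoke Theorem~\ref{theorem:poly-extendable-k} on each, and read off $\gamma'+\beta$ via Lemma~\ref{lemma:grundy-calc-k}. Your write-up is in fact more detailed than the paper's (which simply defers to Theorem~\ref{theorem:2-cluster-modulator-fpt} for the enumeration count), and your discussion of extracting an explicit coloring from a feasible ILP point is a useful addition the paper leaves implicit.
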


\begin{proof}
    Let $F \subseteq V(G)$ be the subset from Lemma~\ref{lemma:R_colorclasses-k}. Following a similar approach as in the proof of Theorem~\ref{theorem:2-cluster-modulator-fpt}, we find that $|F| \leq r 2^{r+k}$. We can guess the correct $Q$ and a Grundy coloring $(C^\prime_1, \ldots, C^\prime_{\gamma^\prime})$ of $G[Q\cup R]$, extendable in $2^{O(kr^2)}$ time. To check whether $(C^\prime_1, \ldots, C^\prime_{\gamma^\prime})$ we can use the algorithm of Theorem~\ref{theorem:poly-extendable-k} in $O(p^{2.5p+o(p)})$ time. By applying the same method as before, we obtain a total running time of $O(2^{O(kr^2)} p^{2.5p+o(p)})$, as desired. \qed
\end{proof}

\section*{Acknowledgments}

We would like to sincerely thank the anonymous reviewers at CALDAM 2025 for their insightful and constructive comments. Their feedback significantly contributed to improving the clarity and quality of this work.

\bibliographystyle{splncs04}
\bibliography{ref.bib}

\begin{thebibliography}{10}
\providecommand{\url}[1]{\texttt{#1}}
\providecommand{\urlprefix}{URL }
\providecommand{\doi}[1]{https://doi.org/#1}

\bibitem{aboulker2023grundy}
Aboulker, P., Bonnet, {\'E}., Kim, E.J., Sikora, F.: Grundy coloring and friends, half-graphs, bicliques. Algorithmica  \textbf{85}(1),  1--28 (2023)

\bibitem{belmonte2022grundy}
Belmonte, R., Kim, E.J., Lampis, M., Mitsou, V., Otachi, Y.: Grundy distinguishes treewidth from pathwidth. SIAM Journal on Discrete Mathematics  \textbf{36}(3),  1761--1787 (2022)

\bibitem{bonnet2018complexity}
Bonnet, {\'E}., Foucaud, F., Kim, E.J., Sikora, F.: Complexity of grundy coloring and its variants. Discrete Applied Mathematics  \textbf{243},  99--114 (2018)

\bibitem{christen1979some}
Christen, C.A., Selkow, S.M.: Some perfect coloring properties of graphs. Journal of Combinatorial Theory, Series B  \textbf{27}(1),  49--59 (1979)

\bibitem{cygan2015parameterized}
Cygan, M., Fomin, F.V., Kowalik, {\L}., Lokshtanov, D., Marx, D., Pilipczuk, M., Pilipczuk, M., Saurabh, S.: Parameterized algorithms, vol.~5. Springer (2015)

\bibitem{10.1145/48014.61051}
Goldberg, A.V., Tarjan, R.E.: A new approach to the maximum-flow problem. J. ACM  \textbf{35}(4),  921–940 (10 1988). \doi{10.1145/48014.61051}, \url{https://doi.org/10.1145/48014.61051}

\bibitem{havet2015complexity}
Havet, F., Maia, A.K., Yu, M.L.: Complexity of greedy edge-colouring. Journal of the Brazilian Computer Society  \textbf{21}, ~1--7 (2015)

\bibitem{havet2013grundy}
Havet, F., Sampaio, L.: On the grundy and b-chromatic numbers of a graph. Algorithmica  \textbf{65},  885--899 (2013)

\bibitem{hedetniemi1982linear}
Hedetniemi, S.M., Hedetniemi, S.T., Beyer, T.: A linear algorithm for the grundy (coloring) number of a tree. Congr. Numer  \textbf{36},  351--363 (1982)

\bibitem{sampaio2012algorithmic}
Sampaio, L.: Algorithmic aspects of graph colourings heuristics. Ph.D. thesis, Universit{\'e} Nice Sophia Antipolis (2012)

\bibitem{telle1997algorithms}
Telle, J.A., Proskurowski, A.: Algorithms for vertex partitioning problems on partial k-trees. SIAM Journal on Discrete Mathematics  \textbf{10}(4),  529--550 (1997)

\bibitem{zaker2005grundy}
Zaker, M.: Grundy chromatic number of the complement of bipartite graphs. Australas. J Comb.  \textbf{31},  325--330 (2005)

\end{thebibliography}

\end{document}